\newcommand{\tmem}[1]{{\em #1\/}}
\newcommand{\tmop}[1]{\ensuremath{\operatorname{#1}}}
\newcommand{\tmtextit}[1]{{\itshape{#1}}}
\newcommand{\um}{-}
\newcommand{\upl}{+}
\newenvironment{itemizedot}{\begin{itemize} }{\end{itemize}}
\newenvironment{proof}{\noindent\textbf{Proof\ }}{\hspace*{\fill}$\Box$\medskip}
\definecolor{grey}{rgb}{0.75,0.75,0.75}
\definecolor{orange}{rgb}{1.0,0.5,0.5}
\definecolor{brown}{rgb}{0.5,0.25,0.0}
\definecolor{pink}{rgb}{1.0,0.5,0.5}
\newtheorem{lemma}{Lemma}
\newtheorem{theorem}{Theorem}
\title{One-side Energy Costs of the RBO Receiver\thanks{This work is supported
by ``Detectors and sensors for measuring factors hazardous to environment --
modeling and monitoring of threats'', a project financed by the European Union
via the European Regional Development Fund and the Polish state budget, within
the framework of the Operational Programme Innovative Economy 2007-2013. Ref.
No. POIG.01.03.01-02-002/08-00. }}
\author{Marcin Kik}
\author{Maciej G\c{e}bala}
\author{Miros{\l}aw Kuty{\l}owski}
\affil{Faculty of Fundamental Problems of Technology\\
Wroc{\l}aw University of Technology\\
ul. Wybrze\.ze Wyspia\'nskiego 27\\
PL-50-370 Wroc{\l}aw\\
Poland}
\begin{document}
\maketitle
\begin{abstract}
  Let $n = 2^k$ be the length of the broadcast cycle of the RBO broadcast
  scheduling protocol (see {\cite{DBLP:journals/corr/abs-1108-5095}},
  {\cite{DBLP:journals/corr/abs-1201-3318}}). Let $\tmop{lb}$ and $\tmop{ub}$
  be the variables of the RBO receiver as defined in
  {\cite{DBLP:journals/corr/abs-1201-3318}}. We show that the number of
  changes of $\tmop{lb}$ (the {\tmem{left-side energy}}) is not greater than
  $k + 1$. We also show that the number of changes of $\tmop{rb}$ (the
  {\tmem{right-side energy}}) is not greater than $k + 2$. Thus the
  {\tmem{extra energy}} \ (defined in
  {\cite{DBLP:journals/corr/abs-1201-3318}}) is bounded by $2 k + 3$. This
  updates the previous bound from {\cite{DBLP:journals/corr/abs-1201-3318}}
  which was $4 k + 2$.
\end{abstract}

\section{Introduction}

In this report we prove precise upper bound on the {\tmem{extra energy}} \
(defined in {\cite{DBLP:journals/corr/abs-1201-3318}}) used by the RBO
receiver. RBO broadcast scheduling protocol has been introduced in
{\cite{DBLP:journals/corr/abs-1108-5095}} and
{\cite{DBLP:journals/corr/abs-1201-3318}}. Following the description from
{\cite{DBLP:journals/corr/abs-1201-3318}}, the general operation of RBO could
be shortly outlined as follows:

A powerful {\tmem{broadcaster}} broadcasts repeatedly a {\tmem{broadcast
cycle}}. The broadcast cycle is a sequence of $n = 2^k$ uniform
{\tmem{messages}}. Transmission of each message occupies a single
{\tmem{time-slot}}. Each message contains a {\tmem{key}}, the {\tmem{index}}
of the key in the {\tmem{sorted}} sequence of all keys and possibly some other
information. The broadcast cycle is a sequence of the messages sorted by the
keys and permuted by $k$-bit reversal permutation.

On the other hand the receiver has a limited source of energy. It has to keep
its radio switched off most of the time to save energy. The receiver may start
at arbitrary time slot during the broadcast cycle and is interested in the
reception of the messages with the keys from some given interval $[\kappa',
\kappa'']$. It uses two variables: $\tmop{lb}$ and $\tmop{ub}$, initiated to
zero and $n - 1$, respectively, such that the indexes of the keys from \
$[\kappa', \kappa'']$ are contained in $[\tmop{lb}, \tmop{ub}]$. The receiver
switches its radio on whenever a message with the key with the index from
$[\tmop{lb}, \tmop{ub}]$ is transmitted. If the key is outside $[\kappa',
\kappa'']$, then the receiver updates either $\tmop{lb}$ or $\tmop{ub}$ and
the unit of energy used for its reception is accounted to so called
{\tmem{extra energy}}. After at most $n$ time slots, the interval $[\tmop{lb},
\tmop{ub}]$ contains only all the indexes of the keys from $[\kappa',
\kappa'']$. We show that the extra energy is bound by $2 k + 3$. This updates
the previous upper bound from {\cite{DBLP:journals/corr/abs-1201-3318}} which
was $4 k + 2$.

\section{Notation and preliminaries}

Let $\mathbbm{Z}$ denote the set of integers. Let $\mathbbm{R}$ denote the set
of real numbers. For simplicity and generality, we assume that the keys are
from $\mathbbm{R}$. By \ $[a, b]$ we denote the interval of real numbers $\{x
\in \mathbbm{R} \; | \; a \leq x \leq b\}$. By $[[a, b]]$ we denote $[a, b]
\cap \mathbbm{Z}$ (i.e. interval of integers between $a$ and $b$). For a
finite set $S$, we denote the number of its elements by $|S|$. Empty set is
denoted by $\emptyset$ and we have \ $\min \emptyset = \upl \infty$ and $\max
\emptyset = \um \infty$.

For $x \geq 0$, $x \in \mathbbm{Z}$, \ let $\tmop{bin}_i (x)$ denote the
sequence of $i$ digits that is binary representation of $x \tmop{mod} 2^i$
(e.g. $\tmop{bin}_4 (5) = (0101)$ -- sequence of zeroes and ones in
{\tmem{parenthesis}} denotes {\tmem{binary}} representation). For a sequence
$\alpha$ that is a binary representation, $(\alpha)_2$ denotes the number
represented by $\alpha$. Thus $(\tmop{bin}_i (x))_2 = x \tmop{mod} 2^i$. Let
$\tmop{bin} (x) = \tmop{bin}_{\lceil \lg_2 (x + 1) \rceil} (x)$. (Note that \
$\tmop{bin} (0) = \tmop{bin}_0 (0)$ is an empty sequence and, hence, for empty
sequence $\alpha$, we have $(\alpha)_2 = 0$.) For binary representation
$\alpha$,
\begin{itemize}
  \item $\alpha^d$ denotes the concatenation of $d$ copies of $\alpha$ (e.g.
  $(1)^4 = (1111)$),
  
  \item $\tmop{rev} \alpha$ denotes reversal of $\alpha$ (e.g. $\tmop{rev}
  (01) = (10)$), and
  
  \item $| \alpha |$ denotes the length \ of $\alpha$.
\end{itemize}
For two binary representations $\alpha$ and $\beta$, $\alpha \beta$ denotes
the concatenation of $\alpha$ and $\beta$ (e.g. $(01) (11) = (0111)$ ). \

For $x \in \mathbbm{Z}$, let $\tmop{rev}_k (x) = (\tmop{rev} \tmop{bin}_k
(x))_2$. Note that $\tmop{rev}_k (x)$ is the number with binary representation
that is a reversal of the $k$-bit binary representation of $x \tmop{mod} 2^k$.
Thus, $\tmop{rev}_k (x) \in [[0, 2^k - 1]]$. Note that, if $k > 0$, then
$\tmop{rev}_k (1) = 2^{k - 1}$. For a set $S \subseteq \mathbbm{Z}$,
$\tmop{rev}_k S$ denotes the {\tmem{image}} of $S$ under $\tmop{rev}_k$, i.e
$\tmop{rev}_k S =\{\tmop{rev}_k (x) | x \in S\}$.

Let $n$ denote the length of the broadcast cycle, $n = 2^k$, \ for integer $k
\geq 0$. Let $\kappa_{- 1}, \kappa_0, \ldots, \kappa_{n - 1}, \kappa_n$ be a
sequence defined as follows:
\begin{itemizedot}
  \item $\kappa_{- 1} = \um \infty$
  
  \item $\kappa_n = \upl \infty$
  
  \item $\kappa_0, \ldots, \kappa_{n - 1}$ is a sorted sequence of $n$ \
  finite real values of the keys transmitted in the broadcast cycle (i.e.
  $\kappa_i \leq \kappa_{i + 1}$, for \ \ \ $\um 1 \leq i \leq n - 1$).
\end{itemizedot}
Let $\tmop{KEYS} =\{\kappa_0, \ldots, \kappa_{n - 1} \}$. $\tmop{KEYS}$ is the
set of all values of the keys transmitted in the broadcast cycle.

\subsection{Outline of the RBO protocol}

Here, we shortly remind the version of the protocol described in
{\cite{DBLP:journals/corr/abs-1201-3318}}:

Broadcasting starts at the time-slot number zero. The broadcaster, at the
time-slot $t$, broadcasts the message with the key with the index
$\tmop{rev}_k (t)$. The receiver, requesting the keys from $[\kappa',
\kappa'']$, where $- \infty < \kappa' \leq \kappa'' < + \infty$, \ has two
variables $\tmop{lb}$ and $\tmop{ub}$ initialized to $0$ and \ $n - 1$,
respectively. The receiver may start at arbitrary time slot $s$, and executes
the following algorithm:
\begin{itemize}
  \item In time-slot $t$, if $\tmop{lb} \leq \tmop{rev}_k (t) \leq \tmop{ub}$,
  then the receiver receives the message with the key $\kappa =
  \kappa_{\tmop{rev}_k (t)}$ and
  \begin{itemize}
    \item if $\kappa < \kappa'$, then it sets $\tmop{lb}$ to $\tmop{rev}_k (t)
    + 1$,
    
    \item if $\kappa'' < \kappa$, then it sets $\tmop{ub}$ to $\tmop{rev}_k
    (t) - 1$,
    
    \item if $\kappa' \leq \kappa \leq \kappa''$, then it reports reception
    of the key $\kappa$ from $[\kappa', \kappa'']$
    
    \item if $\tmop{lb} > \tmop{ub}$, then it reports that there are no keys
    from $[\kappa', \kappa'']$ in the broadcast cycle.
  \end{itemize}
\end{itemize}

\subsection{Notation and preliminaries related to the analysis of the
protocol}

We fix $n$, $k$ and $s$ as follows: $n = 2^k$ is the length of the broadcast
cycle, and $s$ is the time-slot when the RBO receiver starts. W.l.o.g. we
assume that $s \in [[0, n - 1]]$.

Let $r'$ and $r''$ be defined as follows:
\begin{itemizedot}
  \item $r' = \min \{r \in [[0, n]] \; | \; \kappa' \leq \kappa_r \}$, and
  
  \item $r'' = \max \{r \in [[\um 1, n - 1]] \; | \; \kappa_r \leq \kappa''
  \}$.
\end{itemizedot}
For each $r \in [[r', r'']]$, we have $\kappa_r \in [\kappa', \kappa'']$. If
$[\kappa', \kappa''] \cap \tmop{KEYS} = \emptyset$ then, for some $r \in [[\um
1, n - 1]]$, we have $\kappa_r < \kappa'$ and $\kappa'' < \kappa_{r + 1}$ and
$r' = r + 1$ and $r'' = r$. \ If $[\kappa', \kappa''] \cap \tmop{KEYS} \not=
\emptyset$ then, since $\um \infty < \kappa' \leq \kappa'' < \upl \infty$, we
have \ $0 \leq r' \leq r'' \leq n - 1$. Note that $r'$ and $r''$ are the final
values of the variables $\tmop{lb}$ and $\tmop{ub}$, respectively.

We also define of the values: \ $t_i$, $l_i$, and the sets: $Y_i$, $X_i$,
$Y_{i, j}$, and $X_{i, j}$ as follows: $t_0 = s$ and, for $i \geq 0$,
\begin{itemize}
  \item  $l_i = \max \{l \leq k \; | \; t_i \tmop{mod} 2^l = 0\}$ (i.e. $l_i$
  is the length of the longest suffix of zeroes in $\tmop{bin}_k (t_i)$),
  
  \item  $t_{i + 1} = t_i + 2^{l_i}$,
  
  \item $Y_i = [[t_i, t_i + 2^{l_i} - 1]]$,
  
  \item  $X_i = \tmop{rev}_k Y_i$,
\end{itemize}
and, for $i \geq 0$, for $j \in [[0, l_i]]$,
\begin{itemize}
  \item $Y_{i, j} = [[t_i + \lfloor 2^{j - 1} \rfloor, t_i + 2^j - 1]]$ (in
  other words: $Y_{i, 0} =\{t_i \}$ and, for $j \in [[1, l_i]]$, $Y_{i, j} =
  [[t_i + 2^{j - 1}, t_i + 2^j - 1]]$),
  
  \item $X_{i, j} = \tmop{rev}_k Y_{i, j}$.
\end{itemize}
We also define index $\tmop{last}$ as follows: $\tmop{last} = \min \{i \geq 0
| l_i = k\}$. \

$Y_i$ and $Y_{i, j}$ are intervals of the time-slots after $s - 1$. The
infinite sequence of the time-slots $t_s, t_{s + 1}, \ldots$ \ is partitioned
into the intervals $Y_0, Y_1, \ldots$, \ and each $Y_i$ is partitioned into
the intervals $Y_{i, 0}, \ldots, Y_{i, l_i}$. $X_i$ and $X_{i, j}$ are the
sets of indexes of the keys transmitted during the corresponding intervals of
time-slots. Note that, if $s \not= 0$, then for $i \in [[0, \tmop{last} -
1]]$, $Y_i \subseteq [[0, n - 1]]$, and $Y_{\tmop{last}} = [[n, 2 n - 1]]$,
and if $s = 0$, then $\tmop{last} = 0$ and $Y_{\tmop{last}} = [[0, n - 1]]$.
We have $s + n - 1 \in Y_{\tmop{last}}$ and, until the time slot $s + n - 1$,
\ all the elements of the broadcast cycle are transmitted to the receiver.
Note also that $[[0, n - 1]]$ contains a {\tmem{disjoint}} union of the sets
$X_0, \ldots, X_{\tmop{last} - 1}$, and $X_{\tmop{last}} = [[0, n - 1]]$, and
each $X_i$ is a {\tmem{disjoint}} union of the sets $X_{i, 0}, \ldots, X_{i,
l_i}$.

Since all $k$-bit binary representations of $X_i$ have the same suffix of
length $k - l_i$, we may define $\beta_i$ as follows: For $i \in [[0,
\tmop{last}]]$, let the sequence of bits $\beta_i$ be such that $| \beta_i | =
k - l_i$ and $\min X_i = \tmop{rev}_k (t_i) = ((0)^{l_i} \beta_i)_2$. Note
that $X_i =\{(\tmop{bin} (x) \beta_i)_2 | x \in [[0, 2^{l_i} - 1]]\}$.

For $i \in [[0, \tmop{last} - 2]]$, the sequence $\beta_i$ contains prefix
$(1)^{l_{i + 1} - l_i} (0)$, so let $\alpha_i$ be such that $\beta_i =
(1)^{l_{i + 1} - l_i} (0) \alpha_i$.

Note that $\max X_i = ((1)^{l_i} \beta_i)_2$ and, if $i + 1 < \tmop{last}$,
then $\beta_{i + 1} = (1) \alpha_i$. Note that $\beta_{\tmop{last} - 1} =
(1)^{l_{\tmop{last}} - l_{\tmop{last} - 1}}$ and $| \beta_{\tmop{last}} | =
0$. Anyway, for $i \in [[0, \tmop{last} - 1]]$, we have $\beta_i = (1)^{l_{i +
1} - l_i} \gamma$, where $(\gamma)_2 \leq (\beta_{i + 1})_2$. Thus we have
proven the following lemma:

\begin{lemma}
  \label{suffix-lemma} If $i \in [[0, \tmop{last} - 1]]$ \ and $\beta_i = \rho
  \gamma$, where $| \gamma | = | \beta_{i + 1} |$, then $(\gamma)_2 \leq
  (\beta_{i + 1})_2$.
\end{lemma}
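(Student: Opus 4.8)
The plan is to split on the two cases $i = \tmop{last} - 1$ and $i \in [[0, \tmop{last} - 2]]$ and, in each, simply read $\gamma$ off the explicit shapes of $\beta_i$ and $\beta_{i + 1}$ recorded just above the statement.

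First I would observe that the hypothesis $| \gamma | = | \beta_{i + 1} |$ forces $| \rho | = | \beta_i | - | \beta_{i + 1} | = l_{i + 1} - l_i$, which is nonnegative — indeed positive — because the update $t_{i + 1} = t_i + 2^{l_i}$ turns the $1$ in bit position $l_i$ of $t_i$ (that bit is a $1$ since $l_i$ is maximal with $2^{l_i} \mid t_i$, and $l_i < k$ for $i < \tmop{last}$) into a $0$ and carries, so $t_{i + 1}$ has at least $l_i + 1$ trailing zeros and $l_{i + 1} \geq l_i + 1$. Hence $\gamma$ is genuinely a suffix of $\beta_i$, of length $k - l_{i + 1}$, and $\rho$ is its complementary prefix of length $l_{i + 1} - l_i$.

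If $i = \tmop{last} - 1$, then $| \beta_{\tmop{last}} | = 0$, so $\gamma$ is the empty sequence and $(\gamma)_2 = 0 = (\beta_{\tmop{last}})_2$, giving the claim (with equality). If instead $i \in [[0, \tmop{last} - 2]]$, I would use the identities $\beta_i = (1)^{l_{i + 1} - l_i} (0) \alpha_i$ and $\beta_{i + 1} = (1) \alpha_i$ together with $| \rho | = l_{i + 1} - l_i$: this pins down $\rho = (1)^{l_{i + 1} - l_i}$, hence $\gamma = (0) \alpha_i$, and therefore $(\gamma)_2 = (\alpha_i)_2 \leq 2^{| \alpha_i |} + (\alpha_i)_2 = ((1) \alpha_i)_2 = (\beta_{i + 1})_2$.

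There is no real obstacle here: the lemma is little more than a repackaging of the structural observations about $\beta_i$ and $\beta_{i + 1}$ made in the paragraph preceding it. The only point requiring any care is the length bookkeeping — checking that $| \alpha_i | + 1$ really equals $k - l_{i + 1} = | \beta_{i + 1} |$, so that the suffix of $\beta_i$ of that length is exactly $(0) \alpha_i$ with no off-by-one error — which is a routine computation from $| \beta_i | = k - l_i$ and $| (1)^{l_{i + 1} - l_i} (0) | = l_{i + 1} - l_i + 1$.
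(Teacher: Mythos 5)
Your proof is correct and follows essentially the same route as the paper, which derives the lemma directly from the identities $\beta_i = (1)^{l_{i+1}-l_i}(0)\alpha_i$, $\beta_{i+1}=(1)\alpha_i$ for $i\le \tmop{last}-2$ and $\beta_{\tmop{last}-1}=(1)^{l_{\tmop{last}}-l_{\tmop{last}-1}}$, $|\beta_{\tmop{last}}|=0$ in the boundary case. Your added justification that $l_{i+1}\ge l_i+1$ (via the carry in $t_{i+1}=t_i+2^{l_i}$) and the explicit length bookkeeping are details the paper leaves implicit, but they are accurate.
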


Consider the binary representations of the elements of $X_{i, j}$: Note that,
for $i \in [[0, \tmop{last}]]$, for $j \in [[0, l_i]]$, we have $x \in X_{i,
j}$ if and only if either $j = 0$ and $\tmop{bin}_k (x) = (0)^{l_i} \beta_i$ \
or \ $j \in [[1, l_i]]$, and \ $\tmop{bin}_k (x) = \rho (1) (0)^{l_i - j}
\beta_i$, where $(\rho)_2 \in [[0, 2^{j - 1} - 1]]$.

We also define the following infinite extensions of the sets of indexes: For
$i \in [[0, \tmop{last}]]$, let $\mathbbm{X}_i =\{x \in \mathbbm{Z} \; | \; x
\tmop{mod} 2^{k - l_i} = (\beta_i)_2 \}$. Note that $\mathbbm{X}_i \cap [[0, n
- 1]] = X_i$. For $i \in [[0, \tmop{last}]]$, for $j \in [[0, l_i]]$, let
$\mathbbm{X}_{i, j} =\{x \in \mathbbm{Z} \; | \; x \tmop{mod} 2^{k - j} =
((0)^{l_i - j} \beta_i)_2 \}$. Note that $\mathbbm{X}_{i, j} \cap [[0, n - 1]]
= \bigcup_{j' \in [[0, j]]} X_{i, j'}$.

By {\tmem{left-side energy}} (respectively, {\tmem{right-side energy}}) we
mean the number of changes of the variable $\tmop{lb}$ (respectively,
$\tmop{ub}$). Let $\tmop{lb}_t$ (respectively, $\tmop{ub}_t$) be the value of
the variable $\tmop{lb}$ \ (respectively, $\tmop{ub}$) just before the time
slot $t$. By the protocol, $\tmop{lb}_t = \max \left( \{0\} \cup
\{\tmop{rev}_k (t') + 1 \leq r' \; | \; t' \in [[s, t - 1]]\} \right)$ and
$\tmop{ub}_t = \min \left( \{n - 1\} \cup \{\tmop{rev}_k (t') - 1 \geq r'' \;
| \; t' \in [[s, t - 1]]\} \right)$.

For $t \geq 0$, let $L_t =\{\tmop{lb}_{t + 1} - 1\} \setminus \{\tmop{lb}_t -
1\}$ and let $U_t =\{\tmop{ub}_{t + 1} + 1\} \setminus \{\tmop{ub}_t + 1\}$.
If the value of $\tmop{lb}$ (respectively, of $\tmop{ub}$) has changed in the
time slot $t$, then $L_t$ (respectively, $U_t$) contains the index
$\tmop{rev}_k (t)$, otherwise it is empty. Hence, no index \ can appear in
more than one of these sets. \ The left-side energy is equal to $| \bigcup_{t
\geq s} L_t |$. The right-side energy is equal to $| \bigcup_{t \geq s} U_t
|$. Note that, for $t \not\in [[s, s + n - 1]]$, we have $L_t = U_t =
\emptyset$.

We consider only the case $k \geq 2$. (Otherwise, the sum of the left and
right energy is bound by 2.)

\section{Left-side energy}

Let $t' = \min \{t \in [[s, s + n - 1]] \; | \tmop{lb}_{t + 1} > 0\}$. Note
that $t'$ is the first time slot, when $\tmop{lb}$ is updated. If $r' > 0$,
then $t' < \infty$.

Let $m'_{i, j} = \max \{\tmop{lb}_{t + 1} - 1 \; | \; t \in Y_{i, j} \}$.
Note that $m'_{i, j}$ is either $\um 1$ or the maximal index that updated the
value of $\tmop{lb}$ before the time slot \ $\max Y_{i, j} + 1$, and $m'_{i,
j} = \tmop{lb}_{\max Y_{i, j} + 1} - 1$. Let $m'_i = m'_{i, l_i}$.

{\color{black} \begin{lemma}
  \label{m-lemma1}$m'_{i, j} = \max \left( \{\um 1\} \cup \{x \in \bigcup_{i'
  \in [[0, i - 1]]} X_{i'} \cup \bigcup_{j' \in [[0, j]]} X_{i, j'}  \; | \; x
  < r' \} \right)$. If $t' = s$, then $m'_{i, j} \geq 0$.
\end{lemma}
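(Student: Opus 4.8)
The plan is to unfold the closed form for $\tmop{lb}$ stated in the preliminaries and then pass from time slots to key indices via the identities $X_{i'} = \tmop{rev}_k Y_{i'}$ and $X_{i, j'} = \tmop{rev}_k Y_{i, j'}$. First I would use the equality $m'_{i, j} = \tmop{lb}_{\max Y_{i, j} + 1} - 1$ already recorded above and substitute the protocol formula for $\tmop{lb}$. Since, for integers, $\tmop{rev}_k (u) + 1 \leq r'$ is equivalent to $\tmop{rev}_k (u) < r'$, this turns into
\[ m'_{i, j} = \max \left( \{ - 1 \} \cup \{ \tmop{rev}_k (u) \; | \; u \in [[s, \max Y_{i, j}]], \ \tmop{rev}_k (u) < r' \} \right) . \]

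Next I would determine the set $\{ \tmop{rev}_k (u) \; | \; u \in [[s, \max Y_{i, j}]] \}$. This is where the combinatorial structure enters: the blocks $Y_0, Y_1, \ldots$ are consecutive integer intervals whose union is the set of all integers $\geq s$ (because $t_0 = s$ and $t_{i + 1} = t_i + 2^{l_i}$), and within a fixed $Y_i$ the subblocks $Y_{i, 0}, \ldots, Y_{i, l_i}$ are again consecutive, with $\max Y_{i, j} = t_i + 2^j - 1$; hence $[[s, \max Y_{i, j}]] = \bigcup_{i' \in [[0, i - 1]]} Y_{i'} \cup \bigcup_{j' \in [[0, j]]} Y_{i, j'}$. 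Applying $\tmop{rev}_k$, which commutes with unions, and using the definitions of the sets $X_{i'}$ and $X_{i, j'}$, I obtain $\{ \tmop{rev}_k (u) \; | \; u \in [[s, \max Y_{i, j}]] \} = \bigcup_{i' \in [[0, i - 1]]} X_{i'} \cup \bigcup_{j' \in [[0, j]]} X_{i, j'}$. Intersecting both sides with $\{ x \; | \; x < r' \}$ and feeding the result back into the displayed formula yields the first claim.

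For the second claim, note that $t' = s$ means exactly $\tmop{lb}_{s + 1} > 0$, i.e.\ $\tmop{lb}_{s + 1} - 1 \geq 0$. Since $\tmop{lb}_t$ is non-decreasing in $t$ (it is a maximum over a set that grows with $t$) and $\max Y_{i, j} = t_i + 2^j - 1 \geq t_0 = s$ for every admissible pair $(i, j)$, monotonicity gives $m'_{i, j} = \tmop{lb}_{\max Y_{i, j} + 1} - 1 \geq \tmop{lb}_{s + 1} - 1 \geq 0$. The argument is mostly bookkeeping, so the step that deserves the most care is the set identity for $[[s, \max Y_{i, j}]]$ when $i \geq \tmop{last}$: there the time window can extend past $s + n - 1$, so $\tmop{rev}_k$ is no longer injective on it — but commuting $\tmop{rev}_k$ with unions does not require injectivity, and in that regime both sides of the identity collapse to $[[0, n - 1]]$, so it still holds.
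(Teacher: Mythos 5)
Your proposal is correct and follows essentially the same route as the paper: unfold the closed form $m'_{i,j} = \tmop{lb}_{\max Y_{i,j}+1}-1$, identify $\{\tmop{rev}_k(u) \mid u \in [[s,\max Y_{i,j}]]\}$ with $\bigcup_{i'\in[[0,i-1]]}X_{i'}\cup\bigcup_{j'\in[[0,j]]}X_{i,j'}$, and for the second claim use $\tmop{lb}_{s+1}-1\geq 0$ together with monotonicity of $\tmop{lb}$. You merely spell out details (the equivalence of $\tmop{rev}_k(u)+1\leq r'$ with $\tmop{rev}_k(u)<r'$, the partition of $[[s,\max Y_{i,j}]]$ into the $Y$-blocks, and the monotonicity step) that the paper leaves implicit.
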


\begin{proof}
  $m'_{i, j} = \tmop{lb}_{\max Y_{i, j} + 1} - 1 = \max \left( \{\um 1\} \cup
  \{\tmop{rev}_k (t') < r'  \; | \; t' \in [[s, \max Y_{i, j}]]\} \right)$ and
  $\{\tmop{rev}_k (t') \; | \; t' \in [[s, \max Y_{i, j}]]\}= \bigcup_{i' \in
  [[0, i - 1]]} X_{i'} \cup \bigcup_{j' \in [[0, j]]} X_{i, j'}$.
  
  If $t' = s$, then $\tmop{lb}_{s + 1} - 1 \geq 0$. 
\end{proof}}

Let $p'_{i, j} = \max \{x \in \mathbbm{X}_{i, j}  \; | \; x < r' \}$. Note
that either $p'_{i, j} < 0$ or $p'_{i, j} \in \bigcup_{0 \leq j' \leq j} X_{i,
j'}$. {\color{black}  \ By Lemma~\ref{m-lemma1}, $m'_{i, j} \geq p'_{i, j}$.}
Let $p'_i = \max \{x \in \mathbbm{X}_i  \; | \; x < r' \}$. By
Lemma~\ref{m-lemma1}, $m'_i \geq p'_i$. Let $x'_i = \lfloor p'_i / 2^{k - l_i}
\rfloor$. Note that, by definition of $\mathbbm{X}_i$, we have $x'_i \geq -
1$.

\begin{lemma}
  \label{sublevel-lemma1}If $t' = s$, then, for $i \in [[0, \tmop{last}]]$,
  for $j \in [[0, l_i]]$, $\bigcup_{t \in Y_{i, j}} L_t \subseteq \{p'_{i, j}
  \} \cap X_{i, j}$.
\end{lemma}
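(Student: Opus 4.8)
The plan is to establish the following purely local statement, from which the lemma is immediate: \tmem{during the block of time-slots $Y_{i, j}$ the variable $\tmop{lb}$ is changed at most once, and if it is changed its new value is $p'_{i, j} + 1$ with $p'_{i, j} \in X_{i, j}$}. Granting it, every element of $\bigcup_{t \in Y_{i, j}} L_t$ equals $\tmop{rev}_k (t)$ for some $t \in Y_{i, j}$ and hence lies in $X_{i, j}$; and since the sets $L_t$, $t \in Y_{i, j}$, record exactly the newly attained values of $\tmop{lb} - 1$ inside the block, $\bigcup_{t \in Y_{i, j}} L_t$ is either $\emptyset$ or $\{p'_{i, j}\}$, in both cases a subset of $\{p'_{i, j}\} \cap X_{i, j}$. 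For the reductions, note that an index $x$ changes $\tmop{lb}$ in a slot $t$ only when $\tmop{lb}_t \le x < r'$, and then $\tmop{lb}$ becomes $x + 1$: indeed $\kappa_x < \kappa'$ is equivalent to $x < r'$, and $x < r'$ forces $x \le r'' \le \tmop{ub}_t$, so the constraint $x \le \tmop{ub}_t$ is automatic. The case $j = 0$ is trivial since $Y_{i, 0} = \{t_i\}$ is a single slot: if $\tmop{lb}$ changes there it is via $\tmop{rev}_k (t_i) = (\beta_i)_2 = \min X_i < r'$, and as the next element of $\mathbbm{X}_{i, 0}$ above $(\beta_i)_2$ is $(\beta_i)_2 + 2^k \ge r'$ we get $p'_{i, 0} = (\beta_i)_2 \in X_{i, 0}$.

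Assume now $j \ge 1$; then $\tmop{lb}_{\min Y_{i, j}} - 1 = m'_{i, j - 1}$, and by Lemma~\ref{m-lemma1}, $m'_{i, j - 1} = \max ( \{- 1\} \cup \{ z \in \bigcup_{i' < i} X_{i'} \cup \bigcup_{j' \le j - 1} X_{i, j'} \; | \; z < r' \} ) \ge p'_{i, j - 1}$. The core step, and the one I expect to be the main obstacle, is the claim that every $x \in X_{i, j}$ with $x < r'$ satisfies $x \le m'_{i, j - 1}$ \tmem{or} $x = p'_{i, j}$. To prove it I would use the description $X_{i, j} = (\mathbbm{X}_{i, j} \setminus \mathbbm{X}_{i, j - 1}) \cap [[0, n - 1]]$ (from $\mathbbm{X}_{i, j'} \cap [[0, n - 1]] = \bigcup_{j'' \le j'} X_{i, j''}$): so $X_{i, j}$ is exactly the set of integers in $[[0, n - 1]]$ congruent to $(\beta_i)_2 + 2^{k - j}$ modulo $2^{k - j + 1}$, whence each element of $X_{i, j}$ is $\ge (\beta_i)_2 + 2^{k - j} > 0$ and any two differ by at least $2^{k - j + 1}$. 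Given $x \in X_{i, j}$ with $m'_{i, j - 1} < x < r'$, the point $x - 2^{k - j}$ lies in $\mathbbm{X}_{i, j - 1}$ and in $[[0, n - 1]]$ (as $0 \le (\beta_i)_2 \le x - 2^{k - j} < x < r' \le n$), hence in $\bigcup_{j' \le j - 1} X_{i, j'}$, and it is $< r'$; so $m'_{i, j - 1} \ge x - 2^{k - j}$, i.e. $m'_{i, j - 1} \in [[x - 2^{k - j}, x - 1]]$. If some other $x' \in X_{i, j}$ had $x < x' < r'$, then $x' \ge x + 2^{k - j + 1}$, and the same shift argument gives $x' - 2^{k - j} \in \bigcup_{j' \le j - 1} X_{i, j'}$ with $x' - 2^{k - j} < r'$ and $x' - 2^{k - j} \ge x + 2^{k - j} > m'_{i, j - 1}$, contradicting maximality of $m'_{i, j - 1}$. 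Thus $x$ is the largest element of $X_{i, j}$ below $r'$; and since $\mathbbm{X}_{i, j}$ is the disjoint union of $\mathbbm{X}_{i, j - 1}$ (whose largest element below $r'$ is $p'_{i, j - 1} \le m'_{i, j - 1} < x$) and $\mathbbm{X}_{i, j - 1} + 2^{k - j}$ (whose largest element below $r'$ is $\ge 0$, hence lies in $X_{i, j}$ and so equals $x$), we get $x = \max \{ y \in \mathbbm{X}_{i, j} \; | \; y < r' \} = p'_{i, j}$; in particular $p'_{i, j} \in X_{i, j}$.

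To finish, observe that throughout $Y_{i, j}$ the variable $\tmop{lb}$ starts at $m'_{i, j - 1} + 1$ and never decreases, so it is always $> m'_{i, j - 1}$; hence any index changing $\tmop{lb}$ inside the block is an $x \in X_{i, j}$ with $m'_{i, j - 1} < x < r'$, and by the claim $x = p'_{i, j}$. So $\tmop{lb}$ can be changed only by $p'_{i, j}$, and once it is, $\tmop{lb} = p'_{i, j} + 1 > p'_{i, j}$ and no further change is possible in the block; thus $\tmop{lb}$ changes at most once, and if so to $p'_{i, j} + 1$ with $p'_{i, j} \in X_{i, j}$, which is the local statement. The hypothesis $t' = s$ is used only through Lemma~\ref{m-lemma1}. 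The delicate point, as flagged, is the claim — the gap $2^{k - j + 1}$ between consecutive elements of $X_{i, j}$, together with the fact that $m'_{i, j - 1}$ already accounts for every element of $\bigcup_{j' \le j - 1} X_{i, j'}$ below $r'$ (in particular for all the shifted points $x - 2^{k - j}$), is exactly what rules out a second \tmem{fresh} element of $X_{i, j}$ below $r'$.
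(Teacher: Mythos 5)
Your proof is correct and follows essentially the same route as the paper: the base case $j=0$ is identical, and your shift-by-$2^{k-j}$ argument showing that $X_{i,j}\cap[[m'_{i,j-1}+1,r'-1]]$ can contain only $p'_{i,j}$ is the same exploitation of the dyadic structure of $\mathbbm{X}_{i,j}$ (spacing $2^{k-j+1}$ within $X_{i,j}$, together with $m'_{i,j-1}\ge p'_{i,j-1}$) that the paper packages as the step $X_{i,j+1}\cap[[p'_{i,j}+1,r'-1]]\subseteq\{p'_{i,j}+2^{k-j-1}\}$. The only difference is organizational: you argue each level $j$ directly from Lemma~\ref{m-lemma1}, whereas the paper runs it as an induction on $j$ tracking $p'_{i,j}$.
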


\begin{proof}
  The case $j = 0$: We have $\bigcup_{t \in Y_{i, 0}} L_t \subseteq X_{i, 0}$.
  If $r' \leq \min X_{i, 0}$ then \ $\bigcup_{t \in Y_{i, 0}} L_t =
  \emptyset$. Otherwise, since $|X_{i, 0} | = 1$, we have $\min X_{i, 0} =
  \max X_{i, 0} < r'$ and, since $r' \leq n$ and $ X_{i, 0} =\mathbbm{X}_{i,
  0} \cap [[0, n - 1]]$ , we have $X_{i, 0} =\{p'_{i, 0} \}$.
  
  Induction step: For $j \in [[0, l_i - 1]]$, since $p'_{i, j} + 2^{k - j}
  \geq r'$, we have $X_{i, j + 1} \cap [[p'_{i, j} + 1, r' - 1]] \subseteq
  \{p'_{i, j} + 2^{k - j - 1} \}$. If $p'_{i, j} + 2^{k - j - 1} < r'$, then \
  $p'_{i, j + 1} = p'_{i, j} + 2^{k - j - 1}$, otherwise $X_{i, j + 1} \cap
  [[p'_{i, j} + 1, r' - 1]] = \emptyset$. Since $m'_{i, j} \geq p'_{i, j}$ and
  $\bigcup_{t \in Y_{i, j + 1}} L_t \subseteq X_{i, j + 1} \cap [[m'_{i, j} +
  1, r' - 1]]$, we have $\bigcup_{t \in Y_{i, j + 1}} L_t \subseteq \{p'_{i, j
  + 1} \} \cap X_{i, j + 1}$.
\end{proof}

\begin{lemma}
  \label{starting-lemma1}If $t' = s$, then \ $| \bigcup_{t \in Y_0} L_t | \leq
  l_0 + 1$.
\end{lemma}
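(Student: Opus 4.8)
The plan is to reduce the claim immediately to Lemma~\ref{sublevel-lemma1}. Recall that $Y_0 = [[t_0, t_0 + 2^{l_0} - 1]]$ is, by definition, the disjoint union of the sub-intervals $Y_{0, 0}, Y_{0, 1}, \ldots, Y_{0, l_0}$. Consequently
\[
  \bigcup_{t \in Y_0} L_t = \bigcup_{j = 0}^{l_0} \left( \bigcup_{t \in Y_{0, j}} L_t \right) ,
\]
so by subadditivity of cardinality it suffices to bound each inner union by $1$ and then sum over the $l_0 + 1$ values of $j$.

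The bound on each inner union is exactly what Lemma~\ref{sublevel-lemma1} gives us: under the hypothesis $t' = s$, for every $j \in [[0, l_0]]$ we have $\bigcup_{t \in Y_{0, j}} L_t \subseteq \{ p'_{0, j} \} \cap X_{0, j}$, and the right-hand side is a subset of a singleton, hence has at most one element. Therefore
\[
  \left| \bigcup_{t \in Y_0} L_t \right| \leq \sum_{j = 0}^{l_0} \left| \bigcup_{t \in Y_{0, j}} L_t \right| \leq \sum_{j = 0}^{l_0} 1 = l_0 + 1 ,
\]
which is the desired inequality.

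There is essentially no obstacle here: the analytic content has already been absorbed into Lemma~\ref{sublevel-lemma1} (in particular the observation that, once $p'_{i, j}$ is recorded in $\tmop{lb}$, at most one further index of the next sub-level can lie strictly below $r'$). The only things to be careful about are that the decomposition of $Y_0$ into $Y_{0, 0}, \ldots, Y_{0, l_0}$ is indeed a partition (so the union of the $L_t$ really does split as a union over $j$), and that the hypothesis $t' = s$ of the present lemma is precisely the one required to invoke Lemma~\ref{sublevel-lemma1}. Both are immediate from the definitions, so the proof is a one-line application of subadditivity.
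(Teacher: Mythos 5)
Your proof is correct and is essentially identical to the paper's: both decompose $Y_0$ into the sub-intervals $Y_{0,0},\ldots,Y_{0,l_0}$ and invoke Lemma~\ref{sublevel-lemma1} to bound each $\bigl| \bigcup_{t \in Y_{0,j}} L_t \bigr|$ by $1$, summing over the $l_0+1$ values of $j$. No issues.
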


\begin{proof}
  The lemma follows from \ Lemma~\ref{sublevel-lemma1}, since $Y_0 =
  \bigcup_{j \in [[0, l_0]]} Y_{0, j}$ and, for $j \in [[0, l_0]]$, $|
  \bigcup_{t \in Y_{0, j}} L_t | \leq 1$.
\end{proof}

\begin{lemma}
  \label{middle-lemma1}If $t' = s$, then, for $i \in [[0, \tmop{last} - 2]]$,
  $| \bigcup_{t \in Y_{i + 1}} L_t | \leq l_{i + 1} - l_i$.
\end{lemma}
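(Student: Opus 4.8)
The plan is to bound the contribution $\bigcup_{t \in Y_{i+1,j}} L_t$ of each level $j \in [[0, l_{i+1}]]$ of $Y_{i+1}$ separately and then count the contributing levels. By Lemma~\ref{sublevel-lemma1} this set lies in $\{p'_{i+1,j}\} \cap X_{i+1,j}$, so it has at most one element; and since $\tmop{lb}_t \geq m'_i + 1$ throughout $Y_{i+1}$, it is nonempty only if $p'_{i+1,j} \in X_{i+1,j}$ and $p'_{i+1,j} > m'_i$. From the proof of Lemma~\ref{sublevel-lemma1}, $p'_{i+1,0} \leq \cdots \leq p'_{i+1,l_{i+1}} = p'_{i+1}$, with each increment $p'_{i+1,j} - p'_{i+1,j-1}$ equal to $0$ or $2^{k-j}$ (call $j$ a \emph{jump} in the second case); and among the levels $j$ with $p'_{i+1,j} > m'_i$ (for which $p'_{i+1,j} \geq 0$ since $m'_i \geq 0$ by Lemma~\ref{m-lemma1}), one has $p'_{i+1,j} \in X_{i+1,j}$ exactly when $j = 0$ or $j$ is a jump. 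Because $(p'_{i+1,j})_j$ is non-decreasing, the levels with $p'_{i+1,j} > m'_i$ form a final segment $[[j_0, l_{i+1}]]$ (if it is empty we are done), and by minimality $j_0$ is itself a jump or is $0$; hence the left-side energy of $Y_{i+1}$ is at most $1$ plus the number of jumps in $[[j_0 + 1, l_{i+1}]]$. Put $d = l_{i+1} - l_i \geq 1$ and $e = k - l_{i+1} \geq 1$.

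If $j_0 \geq l_i + 1$ we are done immediately, since then $[[j_0 + 1, l_{i+1}]]$ has at most $d - 1$ elements. So suppose $j_0 \leq l_i$. The key step is to pin down $p'_{i+1,j_0}$. It lies in $\mathbbm{X}_{i+1,j_0} \subseteq \mathbbm{X}_{i+1,l_i} = \{x : x \equiv (\beta_{i+1})_2 \pmod{2^{k - l_i}}\}$, an arithmetic progression of difference $2^{k-l_i}$; and $p'_i \leq m'_i < p'_{i+1,j_0} < r' \leq p'_i + 2^{k-l_i}$ (the last inequality because $p'_i + 2^{k-l_i}$ is the next element of $\mathbbm{X}_i$), so $p'_{i+1,j_0}$ is the unique point of that progression in the interval $[[p'_i + 1, r' - 1]]$, which is shorter than the difference. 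Using $\beta_i = (1)^d (0) \alpha_i$ and $\beta_{i+1} = (1) \alpha_i$ one computes $(\beta_i)_2 - (\beta_{i+1})_2 = 2^{d+e} - 3 \cdot 2^{e-1}$, and since $p'_i \equiv (\beta_i)_2 \pmod{2^{k-l_i}}$ this identifies $p'_{i+1,j_0} = p'_i + 3 \cdot 2^{e-1}$. (When $j_0 = 0$ the same reasoning applies to $p'_{i+1,0} = (\beta_{i+1})_2$, with ``level $0$'' playing the role of ``the jump at $j_0$'' below.)

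Finally, every jump at a level $j > j_0$ satisfies (by the proof of Lemma~\ref{sublevel-lemma1}) $2^{k-j} < r' - p'_{i+1,j-1} \leq r' - p'_{i+1,j_0} < r' - p'_i \leq 2^{d+e}$, hence $j \geq l_i + 1$; so these jumps all lie in $[[l_i + 1, l_{i+1}]]$, a set of size $d$. Were all $d$ of them jumps, we would get $p'_{i+1} = p'_{i+1,j_0} + \sum_{j = l_i + 1}^{l_{i+1}} 2^{k-j} = p'_i + 3 \cdot 2^{e-1} + 2^e (2^d - 1)$, contradicting $p'_{i+1} \leq r' - 1 \leq p'_i + 2^{d+e} - 1$ since $3 \cdot 2^{e-1} > 2^e$. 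Hence at most $d - 1$ of them are jumps, and the total is $\leq 1 + (d - 1) = l_{i+1} - l_i$. I expect this off-by-one in the case $j_0 \leq l_i$ to be the whole difficulty: a naive count allows $1 + d$ (the jump at $j_0$ together with all of $[[l_i+1,l_{i+1}]]$), and it is only the precise offset $3 \cdot 2^{e-1} > 2^e$ of $p'_{i+1,j_0}$ above $p'_i$ — forced by the shape $(1)^d(0)\alpha_i$ of $\beta_i$ — that rules out the last jump.
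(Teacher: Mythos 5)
Your proof is correct, and it reaches the bound by a noticeably different route than the paper, even though both rest on the same decomposition (Lemma~\ref{sublevel-lemma1}, one candidate $p'_{i+1,j}$ per level) and on the same structural fact about $\beta_i = (1)^{d}(0)\alpha_i$. The paper splits into two cases according to whether $r' \leq (\tmop{bin}(x'_i+1)(0)^{d}\beta_{i+1})_2$, and in the harder case pins down the exact value of $m'_{i+1,l}$ at every level $l$, concluding that if all of the first $l_{i+1}-l_i$ potential updates occur then the top level $l_{i+1}$ necessarily contributes nothing because the next candidate already exceeds $r'$. You avoid the case split entirely: you observe that only level $j_0$ (the first level whose candidate exceeds $m'_i$) and the ``jump'' levels above it can contribute, you locate $p'_{i+1,j_0}$ by a modular computation as $p'_i + 3\cdot 2^{e-1}$ (which is exactly the paper's element $(\tmop{bin}(x'_i+1)(0)^{d}\beta_{i+1})_2$, at offset $2^{e}+2^{e-1}$ above $p'_i$), show all later jumps live in $[[l_i+1,l_{i+1}]]$, and then kill the extremal configuration in one stroke by summing the increments: $3\cdot 2^{e-1} + 2^{e}(2^{d}-1) > 2^{d+e} \geq r' - p'_i$. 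This ``total increment versus available room'' argument is a clean global replacement for the paper's level-by-level tracking, and it makes transparent where the crucial off-by-one comes from (the offset $3\cdot 2^{e-1}$ strictly exceeding $2^{e}$); the paper's version, in exchange, produces the explicit trajectory of $m'_{i+1,l}$, which is the kind of information its right-side lemmas (e.g.\ Lemma~\ref{steps-lemma}) need. I verified the supporting claims you import from Lemma~\ref{sublevel-lemma1}'s proof (monotonicity of $p'_{i+1,j}$ with increments $0$ or $2^{k-j}$) and the inequalities $p'_i \leq m'_i < p'_{i+1,j_0} < r' \leq p'_i + 2^{k-l_i}$; they all hold, and $e = k - l_{i+1} \geq 1$ is guaranteed by $i \leq \tmop{last}-2$, so $3\cdot 2^{e-1}$ indeed lies strictly inside $[[1, 2^{d+e}-1]]$ as your uniqueness argument requires.
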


\begin{proof}
  Note that $l_{i + 1} - l_i \geq 1$. Recall that $m'_i \geq p'_i = 2^{k -
  l_i} \cdot x'_i + (\beta_i)_2$, $\beta_i = (1)^{l_{i + 1} - l_i} (0)
  \alpha_i$ and $\beta_{i + 1} = (1) \alpha_i$.
  
  Consider the case: $r' \leq (\tmop{bin} (x'_i + 1) (0)^{l_{i + 1} - l_i}
  \beta_{i + 1})_2$. \ Then
  \begin{eqnarray*}
    \bigcup_{t \in Y_{i + 1}} L_t & \subseteq & X_{i + 1} \cap [[m'_i + 1, r'
    - 1]]\\
    & \subseteq & \mathbbm{X}_{i + 1} \cap [[p'_i + 1, r' - 1]]\\
    & \subseteq & \{2^{k - l_i} \cdot x_i' + ((1)^{l_{i + 1} - l_i} \beta_{i
    + 1})_2 \},
  \end{eqnarray*}
  since $p'_i = 2^{k - l_i} \cdot x'_i + ((1)^{l_{i + 1} - l_i} (0)
  \alpha_i)_2 > 2^{k - l_i} \cdot x'_i + ((1)^{l_{i + 1} - l_i - 1} (0)
  \beta_{i + 1})_2$ and $r' \leq (\tmop{bin} (x'_i + 1) (0)^{l_{i + 1} - l_i}
  \beta_{i + 1})_2$.
  
  Consider the case: $(\tmop{bin} (x'_i + 1) (0)^{l_{i + 1} - l_i} \beta_{i +
  1})_2 < r'$. Since
  \begin{itemize}
    \item  $2^{k - l_i} \cdot x'_i + ((0)^{l_{i + 1} - l_i} \beta_{i + 1})_2 <
    p'_i$, and
    
    \item $(\tmop{bin} (x'_i + 2) (0)^{l_{i + 1} - l_i} \beta_{i + 1})_2 >
    (\tmop{bin} (x'_i + 1) (1)^{l_{i + 1} - l_i} (0) \alpha_i)_2 = (\tmop{bin}
    (x'_i + 1) \beta_i)_2 \geq r'$, and
    
    \item each $x \in \bigcup_{l \in [[0, l_i]]} X_{i + 1, l}$ has
    $\tmop{bin}_k (x) = \xi (0)^{l_{i + 1} - l_i} \beta_{i + 1}$, for some
    $\xi$,
  \end{itemize}
  we have
  \[ \bigcup_{l \in [[0, l_i]]} X_{i + 1, l} \cap [[p'_i + 1, r' - 1]]
     =\{(\tmop{bin} (x'_i + 1) (0)^{l_{i + 1} - l_i} \beta_{i + 1})_2 \}. \]
  Hence, $m'_{i + 1, l_i} \geq (\tmop{bin}_{l_i} (x'_i + 1) (0)^{l_{i + 1} -
  l_i} \beta_{i + 1})_2$, and, since $\bigcup_{l \in [[0, l_i]]} \bigcup_{t
  \in Y_{i + 1, l}} L_t \subseteq \bigcup_{l \in [[0, l_i]]} X_{i + 1, l} \cap
  [[p'_i + 1, r' - 1]]$, we also have $| \bigcup_{l \in [[0, l_i]]} \bigcup_{t
  \in Y_{i + 1, l}} L_t | \leq 1$.
  
  By Lemma~\ref{sublevel-lemma1}, for each $l \in [[l_i + 1, l_{i + 1}]]$, we
  have $| \bigcup_{t \in Y_{i + 1, l}} L_t | \leq 1$.
  
  Moreover, if $| \bigcup_{l \in [[0, l_{i + 1} - 1]]} ( \bigcup_{t \in Y_{i
  + 1, l}} L_t) | = l_{i + 1} - l_i$, then, $m'_{i + 1, l_i} \in \bigcup_{l
  \in [[0, l_i]]} \bigcup_{t \in Y_{i + 1, l}} L_t$, and, for each $l \in
  [[l_i + 1, l_{i + 1} - 1]]$, $m'_{i + 1, l} \in X_{i + 1, l} \cap [[m'_{i +
  1, l - 1} + 1, r' - 1]]$. The only such case is $m'_{i + 1, l_i} =
  (\tmop{bin} (x'_i + 1) (0)^{l_{i + 1} - l_i} \beta_{i + 1})_2$, and, for
  each $l \in [[l_i + 1, l_{i + 1} - 1]]$, $m'_{i + 1, l_{}} = (\tmop{bin}
  (x'_i + 1) (1)^{l - l_i} (0)^{l_{i + 1} - l} \beta_{i + 1})_2$. In this
  case, since $(\tmop{bin} (x'_i + 1) (1)^{l_{i + 1} - l_i} \beta_{i + 1})_2 >
  (\tmop{bin} (x'_i + 1) (1)^{l_{i + 1} - l_i} (0) \alpha_i)_2 = (\tmop{bin}
  (x'_i + 1) \beta_i)_2 \geq r'$, we have $X_{i + 1, l_{i + 1}} \cap [[m'_{i +
  1, l_{i + 1} - 1} + 1, r' - 1]] = \emptyset$ and, hence, $| \bigcup_{t \in
  Y_{i + 1, l_{i + 1}}} L_t | = 0$.
\end{proof}

\begin{lemma}
  \label{last-lemma1}If $\tmop{last} > 0$ and $t' = s$, then $| \bigcup_{t \in
  Y_{\tmop{last}}} L_t | \leq l_{\tmop{last}} - l_{\tmop{last} - 1}$.
\end{lemma}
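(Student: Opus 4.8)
The plan is to run the proof of Lemma~\ref{middle-lemma1} almost verbatim, taking $i = \tmop{last} - 1$, and to check that the one place where the shape of $\beta_i$ mattered still works. The differences with a middle block are that now $l_{i + 1} = l_{\tmop{last}} = k > l_i = l_{\tmop{last} - 1}$ (so the target bound $l_{\tmop{last}} - l_{\tmop{last} - 1} = k - l_i$ is $\geq 1$), that $\beta_{i + 1} = \beta_{\tmop{last}}$ is the empty sequence, and that $\beta_i = \beta_{\tmop{last} - 1} = (1)^{k - l_i}$ is an all-ones block rather than of the form $(1)^{l_{i + 1} - l_i} (0) \alpha_i$. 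Since $m'_i \geq p'_i = 2^{k - l_i} \cdot x'_i + (\beta_i)_2$ and $(\beta_i)_2 = 2^{k - l_i} - 1$, we get $p'_i = 2^{k - l_i} (x'_i + 1) - 1$, hence $p'_i + 1 = 2^{k - l_i} (x'_i + 1)$ and $p'_i + 2^{k - l_i} = (\tmop{bin} (x'_i + 1) (1)^{k - l_i})_2 = (\tmop{bin} (x'_i + 1) \beta_i)_2$; and because $p'_i$ is the largest element of the arithmetic progression $\mathbbm{X}_i$ (common difference $2^{k - l_i}$) lying below $r'$, its successor satisfies $p'_i + 2^{k - l_i} \geq r'$.

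If $r' \leq p'_i + 1$, then $m'_i \geq p'_i \geq r' - 1$, so $[[m'_i + 1, r' - 1]] = \emptyset$ and $\bigcup_{t \in Y_{\tmop{last}}} L_t \subseteq X_{\tmop{last}} \cap [[m'_i + 1, r' - 1]] = \emptyset$; so assume $r' > p'_i + 1 = 2^{k - l_i} (x'_i + 1)$. Next I would treat the low sublevels $j \in [[0, l_i]]$ together: every element of $\bigcup_{j \in [[0, l_i]]} X_{\tmop{last}, j} = \mathbbm{X}_{\tmop{last}, l_i} \cap [[0, n - 1]]$ is a multiple of $2^{k - l_i}$, and any index in $\bigcup_{j \in [[0, l_i]]} \bigcup_{t \in Y_{\tmop{last}, j}} L_t$ lies in $[[m'_i + 1, r' - 1]]$; since $2^{k - l_i} x'_i \leq p'_i \leq m'_i$ is too small and $2^{k - l_i} (x'_i + 2) = p'_i + 2^{k - l_i} + 1 > r'$ is too large, the only possible value is $2^{k - l_i} (x'_i + 1)$, so these sublevels together contribute at most one change, and if they contribute one then $m'_{\tmop{last}, l_i} = 2^{k - l_i} (x'_i + 1) = (\tmop{bin} (x'_i + 1) (0)^{k - l_i})_2$. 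By Lemma~\ref{sublevel-lemma1} each of the $k - l_i$ high sublevels $j \in [[l_i + 1, k]]$ contributes at most one change, so $| \bigcup_{t \in Y_{\tmop{last}}} L_t | \leq 1 + (k - l_i)$.

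It remains to rule out the value $k - l_i + 1$, and this is the one slightly delicate point; it is precisely the bookkeeping at the end of the proof of Lemma~\ref{middle-lemma1}. If the total were $k - l_i + 1$, then sublevels $0, \ldots, k - 1$ would already contribute $k - l_i$, forcing the low sublevels to contribute one and each of sublevels $l_i + 1, \ldots, k - 1$ to contribute one; tracing the forced values exactly as there gives $m'_{\tmop{last}, l_i} = (\tmop{bin} (x'_i + 1) (0)^{k - l_i})_2$ and $m'_{\tmop{last}, l} = (\tmop{bin} (x'_i + 1) (1)^{l - l_i} (0)^{k - l})_2$ for $l \in [[l_i + 1, k - 1]]$. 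In particular $m'_{\tmop{last}, k - 1} + 1 = (\tmop{bin} (x'_i + 1) (1)^{k - l_i})_2 = (\tmop{bin} (x'_i + 1) \beta_i)_2 = p'_i + 2^{k - l_i} \geq r'$, so $X_{\tmop{last}, k} \cap [[m'_{\tmop{last}, k - 1} + 1, r' - 1]] = \emptyset$ and sublevel $k$ contributes nothing — a contradiction. Hence $| \bigcup_{t \in Y_{\tmop{last}}} L_t | \leq k - l_i = l_{\tmop{last}} - l_{\tmop{last} - 1}$.

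The point that makes the adaptation go through is that, for $i = \tmop{last} - 1$, $\beta_i$ is the all-ones block $(1)^{k - l_i}$, so that $(\tmop{bin} (x'_i + 1) (1)^{k - l_i})_2$ is literally the successor $p'_i + 2^{k - l_i}$ of $p'_i$ in $\mathbbm{X}_i$; this plays the role that the identity $(\tmop{bin} (x'_i + 1) (1)^{l_{i + 1} - l_i} (0) \alpha_i)_2 = (\tmop{bin} (x'_i + 1) \beta_i)_2$ plays for middle blocks, so no appeal to $\alpha_i$ or to $\beta_{i + 1} = (1) \alpha_i$ is available, and none is needed. (Note the argument is uniform in $i = \tmop{last} - 1$, including the boundary case $\tmop{last} = 1$, where $\beta_0 = (1)^{k - l_0}$ and $m'_0 \geq p'_0$ still holds by Lemma~\ref{m-lemma1}.)
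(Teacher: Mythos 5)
Your proof is correct and follows essentially the same route as the paper's: the same case split on whether $r'$ exceeds $(\tmop{bin} (x'_{\tmop{last} - 1} + 1) (0)^{l_{\tmop{last}} - l_{\tmop{last} - 1}})_2 = p'_{\tmop{last} - 1} + 1$, the same observation that the sublevels $j \in [[0, l_{\tmop{last} - 1}]]$ jointly contribute at most one change (the unique multiple of $2^{k - l_{\tmop{last} - 1}}$ in $[[p'_{\tmop{last} - 1} + 1, r' - 1]]$), and the same final contradiction showing that if all lower sublevels attain their maxima then sublevel $l_{\tmop{last}}$ contributes nothing because $m'_{\tmop{last}, l_{\tmop{last}} - 1} + 1 = (\tmop{bin} (x'_{\tmop{last} - 1} + 1) (1)^{l_{\tmop{last}} - l_{\tmop{last} - 1}})_2 \geq r'$. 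Your explicit justification of that last inequality via the successor $p'_{\tmop{last} - 1} + 2^{k - l_{\tmop{last} - 1}}$ in $\mathbbm{X}_{\tmop{last} - 1}$ is a welcome detail the paper leaves implicit.
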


\begin{proof}
  If $r' \leq (\tmop{bin} (x'_{\tmop{last} - 1} + 1) (0)^{l_{\tmop{last}} -
  l_{\tmop{last} - 1}})_2$, then, since $m'_{\tmop{last} - 1} \geq
  p'_{\tmop{last} - 1} = 2^{l_{\tmop{last}} - l_{\tmop{last} - 1}} \cdot
  x'_{\tmop{last} - 1} + ((1)^{l_{\tmop{last}} - l_{\tmop{last} - 1}})_2$, we
  have $X_{\tmop{last}} \cap [[m'_{\tmop{last} - 1} + 1, r' - 1]] =
  \emptyset$, and, hence $| \bigcup_{t \in Y_{\tmop{last}}} L_t | = 0$. \
  
  Otherwise, we have the case: $(\tmop{bin} (x'_{\tmop{last} - 1} + 1)
  (0)^{l_{\tmop{last}} - l_{\tmop{last} - 1}})_2 < r'$. In this case
  \[ \bigcup_{l \in [[0, l_{\tmop{last} - 1}]]} X_{\tmop{last}, l} \cap
     [[p'_{\tmop{last} - 1} + 1, r' - 1]] =\{(\tmop{bin} (x_{\tmop{last} - 1}'
     + 1) (0)^{l_{\tmop{last}} - l_{\tmop{last} - 1}})_2 \}, \]
  since $2^{k - l_{\tmop{last} - 1}} \cdot x'_{\tmop{last} - 1} + ((0)^{l_{i +
  1} - l_i})_2 < p'_{\tmop{last} - 1}$ and \ $(\tmop{bin} (x_{\tmop{last} -
  1}' + 2) (0)^{l_{\tmop{last}} - l_{\tmop{last} - 1}})_2 > (\tmop{bin}
  (x_{\tmop{last} - 1}' + 1) (1)^{l_{\tmop{last}} - l_{\tmop{last} - 1}})_2
  \geq r'$. Hence, $| \bigcup_{l \in [[0, l_{\tmop{last} - 1}]]} \bigcup_{t
  \in Y_{\tmop{last}, l}} L_t | \leq 1$ and $m'_{\tmop{last}, l_{\tmop{last} -
  1}} \geq (\tmop{bin} (x' + 1) (0)^{l_{\tmop{last}} - l_{\tmop{last} -
  1}})_2$. By Lemma~\ref{sublevel-lemma1}, for $l \in [[l_{\tmop{last} - 1} +
  1, l_{\tmop{last}} - 1]]$, we have $| \bigcup_{t \in Y_{i + 1, l}} L_t |
  \leq 1$. \ Moreover, if $| \bigcup_{l \in [[0, l_{\tmop{last}} - 1]]} (
  \bigcup_{t \in Y_{\tmop{last}, l}} L_t) | = l_{\tmop{last}} - l_{\tmop{last}
  - 1}$, then $m'_{\tmop{last}, l_{\tmop{last}} - 1} = (\tmop{bin}
  (x'_{\tmop{last} - 1} + 1) (1)^{l_{\tmop{last}} - 1 - l_{\tmop{last} - 1}}
  (0))_2$ and, since $(\tmop{bin} (x'_{\tmop{last} - 1} + 1)
  (1)^{l_{\tmop{last}} - l_{\tmop{last} - 1}})_2 \geq r'$, we have
  \[ X_{\tmop{last}, l_{\tmop{last}}} \cap [[m'_{\tmop{last}, l_{\tmop{last}}
     - 1} + 1, r' - 1]] = \emptyset . \]
  
\end{proof}

\begin{lemma}
  \label{left-side-lemma} $| \bigcup_{t \geq s} L_t | \leq k + 1$.
\end{lemma}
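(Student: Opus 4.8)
Here is my plan for proving Lemma~\ref{left-side-lemma}.

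The plan is to reduce to the situation $t' = s$, which the last three lemmas assume, and then to sum the bounds of Lemmas~\ref{starting-lemma1}, \ref{middle-lemma1} and \ref{last-lemma1} over the blocks $Y_0, Y_1, \ldots, Y_{\tmop{last}}$, exploiting that $\sum_i (l_{i+1} - l_i)$ telescopes and that $l_{\tmop{last}} = k$. First the trivial case: if $r' = 0$ then no transmitted key has index below $r'$, so $\tmop{lb}$ never changes, every $L_t$ is empty, and $|\bigcup_{t \ge s} L_t| = 0 \le k + 1$. So assume $r' > 0$; then $t'$ is finite and $t' \in [[s, s + n - 1]]$.

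The key step is the reduction. Since $\tmop{lb}_t = 0$ for all $t \le t'$, the sets $L_t$ with $t < t'$ are empty, so $\bigcup_{t \ge s} L_t = \bigcup_{t \ge t'} L_t$. Moreover, for each $\tau \in [[s, t' - 1]]$ we have $\tmop{rev}_k(\tau) \ge r'$: otherwise $\tmop{lb}_\tau = 0 \le \tmop{rev}_k(\tau) < r' \le r'' + 1 \le \tmop{ub}_\tau + 1$, so at slot $\tau$ the receiver receives the key $\kappa_{\tmop{rev}_k(\tau)} < \kappa'$ and raises $\tmop{lb}$ above $0$, contradicting the minimality of $t'$. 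Feeding this into $\tmop{lb}_t = \max\left(\{0\} \cup \{\tmop{rev}_k(\tau) + 1 \le r' \mid \tau \in [[s, t - 1]]\}\right)$ shows that the slots $\tau \in [[s, t' - 1]]$ never attain the maximum, so for $t \ge t'$ the value $\tmop{lb}_t$ equals that of an RBO receiver started in slot $t' \tmop{mod} n$ (this formula does not involve $\tmop{ub}$, so the possibly smaller value of $\tmop{ub}$ in the original run is irrelevant for the left-side energy). For that shifted receiver the first change of $\tmop{lb}$ happens in its own starting slot, i.e.\ ``$t' = s$'' holds for it, and $\bigcup_{t \ge s} L_t$ is unchanged by the shift. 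Hence we may assume $t' = s$, recomputing $l_i, t_i, Y_i, \beta_i$ for the new start.

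Now with $t' = s$: the $L_t$ are pairwise disjoint (as noted after their definition), and the intervals $Y_0, \ldots, Y_{\tmop{last}}$ are pairwise disjoint and cover every slot of $[[s, s + n - 1]]$ (slots outside this range give empty $L_t$), so $|\bigcup_{t \ge s} L_t| = \sum_{i=0}^{\tmop{last}} \left|\bigcup_{t \in Y_i} L_t\right|$. If $\tmop{last} = 0$ then $s = 0$, $l_0 = k$, and Lemma~\ref{starting-lemma1} alone gives $|\bigcup_{t \in Y_0} L_t| \le l_0 + 1 = k + 1$. If $\tmop{last} \ge 1$, bound the $i = 0$ term by $l_0 + 1$ (Lemma~\ref{starting-lemma1}), each $i \in [[1, \tmop{last} - 1]]$ term by $l_i - l_{i-1}$ (Lemma~\ref{middle-lemma1}), and the $i = \tmop{last}$ term by $l_{\tmop{last}} - l_{\tmop{last} - 1}$ (Lemma~\ref{last-lemma1}); the middle terms telescope to $l_{\tmop{last} - 1} - l_0$, so the total collapses to $(l_0 + 1) + (l_{\tmop{last}-1} - l_0) + (l_{\tmop{last}} - l_{\tmop{last}-1}) = 1 + l_{\tmop{last}} = k + 1$.

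The only genuinely non-routine point is the reduction ``we may assume $t' = s$''; everything else is bookkeeping plus one telescoping sum. Minor checks to keep in mind: the boundary case $\tmop{last} = 1$, where the telescoped middle sum is empty; and the facts $r' \le r'' + 1$ and $\tmop{ub}_\tau \ge r''$, both immediate from the definitions of $r'$, $r''$ and the monotonicity of $\tmop{ub}$.
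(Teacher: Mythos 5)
Your proof is correct and follows essentially the same route as the paper: reduce to the case $t' = s$ (the paper dismisses this with a bare ``w.l.o.g.'', whereas you justify it by showing the slots before $t'$ satisfy $\tmop{rev}_k(\tau) \geq r'$ and hence never touch $\tmop{lb}$), then sum the bounds of Lemmas~\ref{starting-lemma1}, \ref{middle-lemma1} and \ref{last-lemma1} and telescope to $l_{\tmop{last}} + 1 = k + 1$.
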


\begin{proof}
  If $t'  \not\in [[s, s + n - 1]]$ then \ $t' = \infty$ and $r' = 0$ and $|
  \bigcup_{t \geq s} L_t | = 0 < k + 1$.
  
  Otherwise $t' \in [[s, s + n - 1]]$. \ Note that w.l.o.g. we may assume
  that $t' = s$, since $| \bigcup_{t \geq s} L_t | = | \bigcup_{t \geq t'} L_t
  |$. So, let us assume that $t' = s$. We have $| \bigcup_{t \geq s} L_t | =
  \sum_{i = 0}^{\tmop{last}} | \bigcup_{t \in Y_i} L_t |$. By
  Lemmas~\ref{starting-lemma1}, \ref{middle-lemma1}, and \ref{last-lemma1},
  $\sum_{i = 0}^{\tmop{last}} | \bigcup_{t \in Y_i} L_t | \leq l_0 + 1 +
  \sum_{i = 0}^{\tmop{last} - 1} (l_{i + 1} - l_i) = l_{\tmop{last}} + 1 \leq
  k + 1$.
\end{proof}

\section{Right-side energy}

Let $t'' = \min \{t \in [[s, s + n - 1]] \; | \; \tmop{ub}_{t + 1} + 1 \leq n
- 1\}$ (i.e. the first time slot when $\tmop{ub}$ is updated). If $r'' < n -
1$, then $t'' < \infty$.

Let $m''_{i, j} = \min \{\tmop{ub}_{t + 1} + 1 \; | \; t \in Y_{i, j} \}$.
Note that $m''_{i, j} = \tmop{ub}_{\max Y_{i, j} + 1} + 1$. Let $m''_i =
m''_{i, l_i}$.

\begin{lemma}
  \label{m-lemma2}$m''_{i, j} = \min \left( \{n\} \cup \{x \in \bigcup_{i' \in
  [[0, i - 1]]} X_{i'} \cup \bigcup_{j' \in [[0, j]]} X_{i, j'} \; | \; r'' <
  x\} \right)$. If $t'' = s$, then $m''_{i, j} \leq n - 1$.
\end{lemma}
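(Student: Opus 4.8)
The plan is to mirror the proof of Lemma~\ref{m-lemma1} verbatim, using the symmetry between the $\tmop{lb}$ and $\tmop{ub}$ updates. First I would unfold the definitions. By construction $m''_{i, j} = \tmop{ub}_{\max Y_{i, j} + 1} + 1$, and the protocol description gives $\tmop{ub}_t = \min \left( \{n - 1\} \cup \{\tmop{rev}_k (t') - 1 \geq r'' \; | \; t' \in [[s, t - 1]]\} \right)$. Adding $1$ to every element and rewriting the condition $\tmop{rev}_k (t') - 1 \geq r''$ equivalently as $\tmop{rev}_k (t') > r''$, this becomes $m''_{i, j} = \min \left( \{n\} \cup \{\tmop{rev}_k (t') > r'' \; | \; t' \in [[s, \max Y_{i, j}]]\} \right)$.

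The second ingredient is the identification of the set of indexes observed up to time $\max Y_{i, j}$. Since the time-slot segments $Y_0, \ldots, Y_{i - 1}$ together with the initial segments $Y_{i, 0}, \ldots, Y_{i, j}$ of $Y_i$ partition $[[s, \max Y_{i, j}]]$, and by definition $X_{i'} = \tmop{rev}_k Y_{i'}$ and $X_{i, j'} = \tmop{rev}_k Y_{i, j'}$, we obtain $\{\tmop{rev}_k (t') \; | \; t' \in [[s, \max Y_{i, j}]]\} = \bigcup_{i' \in [[0, i - 1]]} X_{i'} \cup \bigcup_{j' \in [[0, j]]} X_{i, j'}$. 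Substituting this into the previous expression for $m''_{i, j}$ yields the claimed formula. For the last sentence, if $t'' = s$ then $\tmop{ub}$ is already updated in time slot $s$, so $\tmop{ub}_{s + 1} + 1 \leq n - 1$; since the sequence $\tmop{ub}_{s + 1}, \tmop{ub}_{s + 2}, \ldots$ is nonincreasing and $\max Y_{i, j} + 1 \geq s + 1$, it follows that $m''_{i, j} = \tmop{ub}_{\max Y_{i, j} + 1} + 1 \leq n - 1$.

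I do not expect any real obstacle. The only points requiring a little care are purely bookkeeping: that $[[s, \max Y_{i, j}]]$ is exactly covered by $Y_0, \ldots, Y_{i - 1}, Y_{i, 0}, \ldots, Y_{i, j}$ (already established in the preliminaries), and the off-by-one translations between $\tmop{ub}$ and $\tmop{ub} + 1$ and between the conditions ``$\geq r''$'' and ``$> r''$''. Everything else is the right-side counterpart of the argument already given for Lemma~\ref{m-lemma1}, so the proof can be kept to a couple of lines.
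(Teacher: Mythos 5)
Your proof is correct and follows exactly the route the paper intends: the paper's own proof of this lemma is just the remark that it is ``similar to the proof of Lemma~\ref{m-lemma1}'', and your argument is precisely that symmetric counterpart (unfolding $m''_{i,j}=\tmop{ub}_{\max Y_{i,j}+1}+1$, shifting the formula for $\tmop{ub}_t$ by one, identifying the revealed indexes with $\bigcup_{i'<i}X_{i'}\cup\bigcup_{j'\le j}X_{i,j'}$, and using monotonicity of $\tmop{ub}$ for the case $t''=s$). No gaps.
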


\begin{proof}
  The proof is similar to the proof of Lemma~\ref{m-lemma1}.
\end{proof}

Let $p''_{i, j} = \min \{x \in \mathbbm{X}_{i, j} \; | \; r'' < x\}$. By
Lemma~\ref{m-lemma2}, $m''_{i, j} \leq p''_{i, j}$. Let $p''_i = \min \{x \in
\mathbbm{X}_i  \; | \; r'' < x\}$. By Lemma~\ref{m-lemma2}, $m''_i \leq
p''_i$. Let $x''_i = \lfloor p''_i / 2^{k - l_i} \rfloor$. Since $r'' \geq -
1$, we have $p''_i \geq 0$ and $x''_i \geq 0$.

\begin{lemma}
  \label{sublevel-lemma2}If $t'' = s$, then, for $i \in [[0, \tmop{last}]]$,
  for $j \in [[0, l_i]]$, $\bigcup_{t \in Y_{i, j}} U_t \subseteq \{p''_{i, j}
  \} \cap X_{i, j}$.
\end{lemma}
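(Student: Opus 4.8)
The plan is to dualise the proof of Lemma~\ref{sublevel-lemma1} step for step: wherever that argument used $r'$, $p'_{i,j}$, $m'_{i,j}$, $L_t$ and the condition ``$<r'$'', I will use $r''$, $p''_{i,j}$, $m''_{i,j}$, $U_t$ and the condition ``$>r''$'', and I will replace the halving offset ``$+2^{k-j-1}$'' by ``$-2^{k-j-1}$''. The assumption $t''=s$ is carried along exactly as $t'=s$ was (it guarantees, via Lemma~\ref{m-lemma2}, that $m''_{i,j}\le n-1$, and in particular that $r''<n-1$, so all the sets below are non-trivial when they need to be). As before, the proof is by induction on $j$, and each level is essentially self-contained given Lemma~\ref{m-lemma2}.

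For the base case $j=0$: by the protocol $\bigcup_{t\in Y_{i,0}}U_t\subseteq X_{i,0}$, and $X_{i,0}=\{\tmop{rev}_k(t_i)\}$ is a singleton. If $\tmop{rev}_k(t_i)\le r''$, then no index of $X_{i,0}$ can ever update $\tmop{ub}$, so $\bigcup_{t\in Y_{i,0}}U_t=\emptyset$. Otherwise $r''<\tmop{rev}_k(t_i)$; since $r''\ge-1$ we have $p''_{i,0}\ge 0$, and since $X_{i,0}=\mathbbm{X}_{i,0}\cap[[0,n-1]]$, every element of $\mathbbm{X}_{i,0}$ lying in $[[r''+1,\tmop{rev}_k(t_i)]]$ must equal $\tmop{rev}_k(t_i)$; hence $p''_{i,0}=\tmop{rev}_k(t_i)$ and $X_{i,0}=\{p''_{i,0}\}$, as required. (This is the mirror of the use of $r'\le n$ in Lemma~\ref{sublevel-lemma1}.)

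For the induction step, fix $j\in[[0,l_i-1]]$. Because $p''_{i,j}$ is the smallest element of $\mathbbm{X}_{i,j}$ exceeding $r''$, the preceding element $p''_{i,j}-2^{k-j}$ of $\mathbbm{X}_{i,j}$ satisfies $p''_{i,j}-2^{k-j}\le r''$; since $\mathbbm{X}_{i,j+1}$ refines $\mathbbm{X}_{i,j}$ exactly by inserting the midpoints $\mathbbm{X}_{i,j}+2^{k-j-1}$, the only possible element of $X_{i,j+1}$ in $[[r''+1,p''_{i,j}-1]]$ is $p''_{i,j}-2^{k-j-1}$. If $p''_{i,j}-2^{k-j-1}>r''$ then $p''_{i,j+1}=p''_{i,j}-2^{k-j-1}$; otherwise $X_{i,j+1}\cap[[r''+1,p''_{i,j}-1]]=\emptyset$ and $p''_{i,j+1}=p''_{i,j}$. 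Now $Y_{i,j}$ and $Y_{i,j+1}$ are consecutive intervals of time slots, so $\tmop{ub}_{\min Y_{i,j+1}}=m''_{i,j}-1$, and monotonicity of $\tmop{ub}$ together with the update rule gives $\bigcup_{t\in Y_{i,j+1}}U_t\subseteq X_{i,j+1}\cap[[r''+1,m''_{i,j}-1]]$; by Lemma~\ref{m-lemma2}, $m''_{i,j}\le p''_{i,j}$, so this set is contained in $X_{i,j+1}\cap[[r''+1,p''_{i,j}-1]]\subseteq\{p''_{i,j+1}\}\cap X_{i,j+1}$.

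The only part that needs a moment's care — and the most likely source of a slip — is the combinatorics of the binary descriptions $\mathbbm{X}_{i,j}$ versus $\mathbbm{X}_{i,j+1}$, i.e. verifying that the one surviving candidate $p''_{i,j}-2^{k-j-1}$ is genuinely $p''_{i,j+1}$, and keeping straight the boundary case where $p''_{i,j}$ (or $p''_{i,j}-2^{k-j-1}$) falls outside $[[0,n-1]]$, in which case $\{p''_{i,j+1}\}\cap X_{i,j+1}=\emptyset$ and the inclusion still holds because $\bigcup_{t\in Y_{i,j+1}}U_t\subseteq X_{i,j+1}\subseteq[[0,n-1]]$. I expect no genuine obstacle beyond faithfully transposing the earlier argument.
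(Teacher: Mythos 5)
Your proposal is correct and is exactly what the paper intends: its own proof of this lemma is just the one-line remark that it is the dual of Lemma~\ref{sublevel-lemma1}, and your step-by-step transposition (swapping $r',p',m',L_t$ for $r'',p'',m'',U_t$, reversing the inequalities, and replacing the offset $+2^{k-j-1}$ by $-2^{k-j-1}$) faithfully reproduces that argument, including the correct handling of the base case via $r''\geq -1$ and of the boundary case where $p''_{i,j+1}$ falls outside $[[0,n-1]]$.
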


\begin{proof}
  The proof is similar to the proof of \ Lemma~\ref{sublevel-lemma1}.
\end{proof}

\begin{lemma}
  \label{starting-lemma2}If $t'' = s$ then $| \bigcup_{t \in Y_0} U_t | = 1$.
\end{lemma}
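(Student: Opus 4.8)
The plan is to follow the template of Lemmas~\ref{sublevel-lemma1} and~\ref{starting-lemma1}, but to exploit that on the right side the hypothesis $t'' = s$ does two things at once: it forces a change of $\tmop{ub}$ in the first sub-interval $Y_{0, 0}$ (so the count is at least $1$), and, because $\tmop{ub}$ is monotone non-increasing, it immediately blocks every later change within $Y_0$ (so the count is at most $1$). This is why we get the exact value $1$ here, rather than the bound $l_0 + 1$ obtained on the left.

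First I would unwind the definition of $t''$. Since $t'' = s$, we have $\tmop{ub}_{s + 1} + 1 \leq n - 1$, i.e.\ $\tmop{ub}$ is actually updated in the time slot $s = t_0$; hence $U_{t_0} = \{\tmop{rev}_k (t_0)\} = \{\min X_0\}$ and $\tmop{ub}_{t_0 + 1} = \min X_0 - 1$. As $Y_{0, 0} = \{t_0\}$, this already gives $| \bigcup_{t \in Y_{0, 0}} U_t | = 1$ (and, incidentally, $r'' < \min X_0$, which identifies $p''_{0, 0}$ with $\min X_0$, should one wish to phrase things through Lemma~\ref{sublevel-lemma2}).

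Next I would rule out any further change of $\tmop{ub}$ inside $Y_0$, i.e.\ show $U_t = \emptyset$ for every $t \in Y_{0, j}$ with $j \in [[1, l_0]]$. For such $t$ we have $\tmop{rev}_k (t) \in X_{0, j}$; since $X_0$ is the disjoint union of $X_{0, 0} = \{\min X_0\}, X_{0, 1}, \ldots, X_{0, l_0}$, every element of $X_{0, j}$ with $j \geq 1$ is strictly larger than $\min X_0$. On the other hand $\tmop{ub}$ never increases, so $\tmop{ub}_t \leq \tmop{ub}_{t_0 + 1} = \min X_0 - 1 < \tmop{rev}_k (t)$. By the protocol the receiver switches its radio on in slot $t$ only when $\tmop{lb}_t \leq \tmop{rev}_k (t) \leq \tmop{ub}_t$, so here the radio stays off and $\tmop{ub}$ is untouched, giving $U_t = \emptyset$. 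Summing the contributions over $Y_0 = \bigcup_{j \in [[0, l_0]]} Y_{0, j}$ then yields $| \bigcup_{t \in Y_0} U_t | = 1$.

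I do not expect a real obstacle; the only point deserving a word of care is the last one, namely that the radio-off conclusion must be drawn from $\tmop{rev}_k (t) > \tmop{ub}_t$ alone, with no assumption about $\tmop{lb}_t$ --- which is exactly what the reception condition $\tmop{lb} \leq \tmop{rev}_k (t) \leq \tmop{ub}$ provides. If one prefers to stay entirely inside the $p''$/Lemma~\ref{sublevel-lemma2} framework, the same conclusion follows by checking that $\{p''_{0, j}\} \cap X_{0, j} = \emptyset$ for $j \geq 1$: indeed $\min X_0 \in \mathbbm{X}_{0, j}$ and $\min X_0 > r''$, so $p''_{0, j} \leq \min X_0$, while $X_{0, j} \subseteq X_0 \subseteq [[\min X_0, \max X_0]]$ with $\min X_0$ lying only in $X_{0, 0}$; hence $p''_{0, j} \not\in X_{0, j}$, and Lemma~\ref{sublevel-lemma2} forces $\bigcup_{t \in Y_{0, j}} U_t = \emptyset$.
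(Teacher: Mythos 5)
Your proof is correct and takes essentially the same route as the paper's (one-line) proof: since $t'' = s = t_0 = \min Y_0$ and $\tmop{rev}_k(t_0) = \min X_0$, the very first slot of $Y_0$ forces the update $\tmop{ub} \leftarrow \min X_0 - 1$, after which every remaining index in $X_0$ exceeds $\tmop{ub}$ and can cause no further change. You merely spell out the details (including the alternative phrasing via $p''_{0,j}$ and Lemma~\ref{sublevel-lemma2}) that the paper leaves implicit.
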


\begin{proof}
  Note that $t'' = s = t_0 = \min Y_0$ and $\tmop{rev}_k (\min Y_0) = \min
  X_0$.
\end{proof}

\begin{lemma}
  \label{middle-lemma2}If $t'' = s$, then, for $i \in [[0, \tmop{last} - 2]]$,
  \ $| \bigcup_{t \in Y_{i + 1}} U_t | \leq \max \{l_{i + 1} - l_i, 2\}$.
\end{lemma}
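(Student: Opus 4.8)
The plan is to follow the template of the left-side lemmas, treating the sublevels $Y_{i+1,0},\ldots,Y_{i+1,l_{i+1}}$ of $Y_{i+1}$ separately. Write $d=l_{i+1}-l_i$; recall $d\geq 1$ and that, for $i\in[[0,\tmop{last}-2]]$, $\beta_i=(1)^{d}(0)\alpha_i$ and $\beta_{i+1}=(1)\alpha_i$, so that $(\beta_{i+1})_2<(\beta_i)_2<2^{k-l_i}$. By Lemma~\ref{sublevel-lemma2} each individual sublevel contributes at most one index to $\bigcup_t U_t$, so the ``naive'' bound is $l_{i+1}+1$. The idea is to collapse the ``low'' sublevels $Y_{i+1,0},\ldots,Y_{i+1,l_i}$ into one unit contributing $\leq 1$, and then, when $d\geq 2$, to show that the first ``high'' sublevel $Y_{i+1,l_i+1}$ contributes nothing whenever the low unit contributed something.

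First, the low sublevels. Their indexes form $\bigcup_{j\leq l_i}X_{i+1,j}=\mathbbm{X}_{i+1,l_i}\cap[[0,n-1]]$, which contains exactly one element $2^{k-l_i}m+(\beta_{i+1})_2$ in each block $[[2^{k-l_i}m,2^{k-l_i}(m+1)-1]]$. Since $\tmop{ub}$ only decreases, any index picked up in $Y_{i+1}$ lies in $[[r''+1,m''_i-1]]$, and $m''_i\leq p''_i=2^{k-l_i}x''_i+(\beta_i)_2$ by Lemma~\ref{m-lemma2}. Using $(\beta_{i+1})_2<(\beta_i)_2<2^{k-l_i}$ and the minimality of $p''_i$ (hence $2^{k-l_i}(x''_i-1)+(\beta_i)_2\leq r''$) one checks that among these block representatives only the one in block $x''_i$, namely $e:=2^{k-l_i}x''_i+(\beta_{i+1})_2$, can possibly fall in $[[r''+1,m''_i-1]]$: the lower blocks are $\leq r''$ and the higher ones are $>p''_i\geq m''_i$. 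So the low sublevels contribute $\leq 1$, and if they contribute $1$ then $e>r''$ and $e<m''_i$, whence $m''_{i+1,l_i}=e$.

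Now assemble the bound. If $d=1$ there is a single high sublevel, contributing $\leq 1$, so the total is $\leq 1+1=2=\max\{d,2\}$. Assume $d\geq 2$. If the low sublevels contribute $0$, then $\bigcup_{t\in Y_{i+1}}U_t$ is covered by the $d$ high sublevels $Y_{i+1,l_i+1},\ldots,Y_{i+1,l_{i+1}}$ and so has size $\leq d$. If they contribute $1$, so $m''_{i+1,l_i}=e$, I claim $Y_{i+1,l_i+1}$ contributes nothing, which gives a total $\leq 1+0+(d-1)=d$. For the claim: from the bit description $X_{i+1,l_i+1}=\{\,2^{k-l_i}m+2^{k-l_i-1}+(\beta_{i+1})_2 : 0\leq m\leq 2^{l_i}-1\,\}$; the members with $m\geq x''_i$ exceed $e=m''_{i+1,l_i}$ and so cannot update $\tmop{ub}$, and for the largest member with $m\leq x''_i-1$ (if $x''_i=0$ there is none and the claim is immediate), using $(\beta_i)_2=(2^{d}-1)2^{k-l_{i+1}}+(\alpha_i)_2$ and $(\beta_{i+1})_2=2^{k-l_{i+1}-1}+(\alpha_i)_2$,
\[ \bigl(2^{k-l_i}(x''_i-1)+2^{k-l_i-1}+(\beta_{i+1})_2\bigr)-\bigl(2^{k-l_i}(x''_i-1)+(\beta_i)_2\bigr)=\Bigl(\tfrac{3}{2}-2^{d-1}\Bigr)2^{k-l_{i+1}}<0 \]
because $d\geq 2$; hence that member is $\leq 2^{k-l_i}(x''_i-1)+(\beta_i)_2\leq r''$, so it is not an index exceeding $r''$, and the members with smaller $m$ are smaller still. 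Thus $Y_{i+1,l_i+1}$ contributes $0$. In every case $|\bigcup_{t\in Y_{i+1}}U_t|\leq\max\{l_{i+1}-l_i,2\}$.

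I expect the displayed inequality to be the crux: it requires expanding $(\beta_i)_2$ and $(\beta_{i+1})_2$ correctly and noticing that the difference is negative precisely when $d\geq 2$ (it becomes $+\tfrac{1}{2}\cdot 2^{k-l_{i+1}}>0$ at $d=1$), which is exactly why the bound reads $\max\{l_{i+1}-l_i,2\}$ rather than $l_{i+1}-l_i$. A secondary point to spell out is the degenerate possibility $p''_i\geq n$ (equivalently $x''_i\geq 2^{l_i}$): there block $x''_i$ lies outside $[[0,n-1]]$, so the low sublevels contribute $0$ automatically and this falls under the ``contribute $0$'' branch. Alternatively, one could mimic the organisation of Lemma~\ref{middle-lemma1} verbatim — split on where $r''$ sits and show that if $Y_{i+1,0},\ldots,Y_{i+1,l_{i+1}-1}$ are all ``full'' then $Y_{i+1,l_{i+1}}$ is empty — but the argument via $m''_{i+1,l_i}=e$ seems cleaner.
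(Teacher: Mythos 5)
Your proof is correct, and it reaches the bound by a route that overlaps with but does not coincide with the paper's. The shared skeleton is: collapse the low sublevels $Y_{i+1,0},\dots,Y_{i+1,l_i}$ into one unit whose only possible contribution is $e=2^{k-l_i}x''_i+(\beta_{i+1})_2$ (the paper's $a$), then control the $d=l_{i+1}-l_i$ high sublevels via Lemma~\ref{sublevel-lemma2}. The difference is where the crucial saving of one comes from. The paper splits on the position of $r''$ relative to $a$: when $r''<a$ it shows the high sublevels \emph{collectively} contribute at most the single element $b=(x''_i-1)\cdot 2^{k-l_i}+((1)^{d}\beta_{i+1})_2$, which sits in the top sublevel $X_{i+1,l_{i+1}}$, so the total on that branch is at most $2$; when $a\le r''$ the low block is empty and the total is at most $d$. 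You instead split on $d$ and on whether the low block actually fires, and when it fires (so $m''_{i+1,l_i}=e$) you show the \emph{first} high sublevel $X_{i+1,l_i+1}$ is empty, via the computation $\bigl(\tfrac32-2^{d-1}\bigr)2^{k-l_{i+1}}<0$ for $d\ge2$. Both mechanisms are sound: I checked your block-representative argument for the low sublevels (using $2^{k-l_i}(x''_i-1)+(\beta_i)_2\le r''$ and $e<p''_i$), the expansion $(\beta_i)_2=(2^d-1)2^{k-l_{i+1}}+(\alpha_i)_2$, $(\beta_{i+1})_2=2^{k-l_{i+1}-1}+(\alpha_i)_2$, and the degenerate cases $x''_i=0$ and $x''_i\ge 2^{l_i}$, and they all hold. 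The paper's version buys the sharper count $\le 2$ on its first branch together with the identity of the two candidates $a,b$ (which it then re-derives in Lemmas~\ref{steps-lemma} and~\ref{bin-lemma} for the final count in Lemma~\ref{right-side-lemma}), whereas your version makes it more transparent why the bound is $\max\{d,2\}$ rather than $d$: the sign of $\tfrac32-2^{d-1}$ flips exactly at $d=1$. For the statement as posed, your argument is complete.
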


\begin{proof}
  Recall that $l_{i + 1} - l_i \geq 1$ and that $m''_i \leq p''_i =
  (\tmop{bin} (x''_i) \beta_i)_2$. Thus, we have $(x''_i - 1) \cdot 2^{k -
  l_i} + (\beta_i)_2 \leq r''$. Recall that $\beta_i = (1)^{l_{i + 1} - l_i}
  (0) \alpha_i$ and $\beta_{i + 1} = (1) \alpha_i$.
  
  Consider the case: $r'' < (\tmop{bin} (x''_i) (0)^{l_{i + 1} - l_i} \beta_{i
  + 1})_2$. We show that in this case \ $\bigcup_{t \in Y_{i + 1}} U_t
  \subseteq \{a, b\}$, where $a = (\tmop{bin} (x''_i) (0)^{l_{i + 1} - l_i}
  \beta_{i + 1})_2$ and $b = (x''_i - 1) \cdot 2^{k - l_i} + ((1)^{l_{i + 1} -
  l_i} \beta_{i + 1})_2 $.
  
  First, note that \ $\bigcup_{l \in [[0, l_i]]} \bigcup_{t \in Y_{i + 1, l}}
  U_t \subseteq \bigcup_{l \in [[0, l_i]]} X_{i + 1, l} \cap [[r'' + 1, m''_i
  - 1]]$ and $m''_i \leq p''_i$. We have $\bigcup_{l \in [[0, l_i]]} X_{i + 1,
  l} \subseteq \mathbbm{X}_{i + 1, l_i}$, and the only $x \in \mathbbm{X}_{i +
  1, l_i}$, such that $(x''_i - 1) \cdot 2^{k - l_i} + (\beta_i)_2 < x <
  (\tmop{bin} (x''_i) \beta_i)_2$ is $x = (\tmop{bin} (x''_i) (0)^{l_{i + 1} -
  l_i} \beta_{i + 1})_2 = a$. (Note that this also implies $m''_{i + 1, l_i}
  \leq a$.)
  
  Second, note that $\bigcup_{l \in [[l_i + 1, l_{i + 1}]]} \bigcup_{t \in
  Y_{i + 1, l}} U_t \subseteq B$, where $B = \bigcup_{l \in [[l_i + 1, l_{i +
  1}]]} X_{i + 1, l} \cap [[r'' + 1, m''_{i + 1, l_i} - 1]]$ and, since
  $(x''_i - 1) \cdot 2^{k - l_i} + ((1)^{l_{i + 1} - l_i} (0) \alpha_i)_2 \leq
  r''$ and \ $m''_{i + 1, l_i} \leq (\tmop{bin} (x''_i) (0)^{l_{i + 1} - l_i}
  \beta_{i + 1})_2$, we have \ $B \subseteq \{(x''_i - 1) \cdot 2^{k - l_i} +
  ((1)^{l_{i + 1} - l_i} \beta_{i + 1})_2 \}=\{b\}$.
  
  Consider the case: $(\tmop{bin} (x''_i) (0)^{l_{i + 1} - l_i} \beta_{i +
  1})_2 \leq r''$. We have $\bigcup_{t \in Y_{i + 1}} U_t \subseteq C$, where
  $C = X_{i + 1} \cap [[r'' + 1, m''_i - 1]]$. Since $(\tmop{bin} (x''_i)
  (0)^{l_{i + 1} - l_i} \beta_{i + 1})_2 \leq r''$ and $m''_i \leq p''_i <
  (\tmop{bin} (x''_i) (1)^{l_{i + 1} - l_i} \beta_{i + 1})_2$, we have $C
  =\{(\tmop{bin} (x''_i) \gamma \beta_{i + 1})_2  \; | \; | \gamma | = l_{i +
  1} - l_i \wedge (\gamma)_2 \in [[1, 2^{l_{i + 1} - l_i} - 2]]\}$. Since, for
  $\gamma = (0)^{l_{i + 1} - l_i}$, $(\tmop{bin} (x''_i) \gamma \beta_{i +
  1})_2 \not\in C$, we have \ $C \subseteq \bigcup_{l \in [[l_i + 1, l_{i +
  1}]]} X_{i + 1, l}$. Thus, $\bigcup_{l \in [[0, l_i]]} \bigcup_{t \in Y_{i +
  1, l}} U_t = \emptyset$. By Lemma~\ref{sublevel-lemma2}, \ $| \bigcup_{l \in
  [[l_i + 1, l_{i + 1}]]} \bigcup_{t \in Y_{i + 1}, l} U_t | \leq l_{i + 1} -
  l_i$. Thus, $| \bigcup_{t \in Y_{i + 1}} U_t | \leq l_{i + 1} - l_i$ in this
  case.
\end{proof}

\begin{lemma}
  \label{last-lemma2}If \ $t'' = s$ \ and \ $0 < \tmop{last}$, then $|
  \bigcup_{t \in Y_{\tmop{last}}} U_t | \leq l_{\tmop{last}} - l_{\tmop{last}
  - 1}$.
\end{lemma}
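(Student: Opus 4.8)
The plan is to follow the proof of Lemma~\ref{last-lemma1} with all ingredients replaced by their right-side counterparts: $\tmop{ub}$ for $\tmop{lb}$, $U_t$ for $L_t$, Lemma~\ref{m-lemma2} for Lemma~\ref{m-lemma1}, Lemma~\ref{sublevel-lemma2} for Lemma~\ref{sublevel-lemma1}, and the bound $m''_{\tmop{last}-1}\le p''_{\tmop{last}-1}$ for $m'_{\tmop{last}-1}\ge p'_{\tmop{last}-1}$. First I would record the data at level $\tmop{last}$: since $\tmop{last}>0$ we have $\beta_{\tmop{last}-1}=(1)^{l_{\tmop{last}}-l_{\tmop{last}-1}}$, $|\beta_{\tmop{last}}|=0$ and $l_{\tmop{last}}=k$, so that $\bigcup_{l\in[[0,l_{\tmop{last}-1}]]}X_{\tmop{last},l}=\mathbbm{X}_{\tmop{last},l_{\tmop{last}-1}}\cap[[0,n-1]]$ is exactly the set of multiples of $2^{k-l_{\tmop{last}-1}}$ in $[[0,n-1]]$, that $p''_{\tmop{last}-1}=(\tmop{bin}(x''_{\tmop{last}-1})(1)^{k-l_{\tmop{last}-1}})_2=(x''_{\tmop{last}-1}+1)\,2^{k-l_{\tmop{last}-1}}-1$, and, from the element of $\mathbbm{X}_{\tmop{last}-1}$ immediately below $p''_{\tmop{last}-1}$, that $x''_{\tmop{last}-1}\,2^{k-l_{\tmop{last}-1}}-1\le r''$. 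Write $a=(\tmop{bin}(x''_{\tmop{last}-1})(0)^{k-l_{\tmop{last}-1}})_2=x''_{\tmop{last}-1}\,2^{k-l_{\tmop{last}-1}}$.

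Next I would split according to the position of $r''$. If $a\le r''$: arguing as in the proof of Lemma~\ref{middle-lemma2} one gets $\bigcup_{l\in[[0,l_{\tmop{last}-1}]]}\bigcup_{t\in Y_{\tmop{last},l}}U_t\subseteq\bigl(\bigcup_{l\in[[0,l_{\tmop{last}-1}]]}X_{\tmop{last},l}\bigr)\cap[[r''+1,m''_{\tmop{last}-1}-1]]$, and since $m''_{\tmop{last}-1}\le p''_{\tmop{last}-1}$ this set lies in $[[a+1,(x''_{\tmop{last}-1}+1)\,2^{k-l_{\tmop{last}-1}}-2]]$, which contains no multiple of $2^{k-l_{\tmop{last}-1}}$; hence the first $l_{\tmop{last}-1}+1$ sub-levels of $Y_{\tmop{last}}$ together contribute nothing, while Lemma~\ref{sublevel-lemma2} bounds each of the remaining $l_{\tmop{last}}-l_{\tmop{last}-1}$ sub-levels $l\in[[l_{\tmop{last}-1}+1,l_{\tmop{last}}]]$ by $1$, which gives the claim. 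If instead $r''<a$: together with $x''_{\tmop{last}-1}\,2^{k-l_{\tmop{last}-1}}-1\le r''$ this forces $r''=a-1$; since $a$ is a multiple of $2^{k-l_{\tmop{last}-1}}$ with $r''<a\le n-1$ (here $a\le n-1$ because $t''=s$ forces $r''\le n-2$), Lemma~\ref{m-lemma2} gives $m''_{\tmop{last},l_{\tmop{last}-1}}\le a=r''+1$, and since $m''_{\tmop{last},l'}$ is non-increasing in $l'$ every window $[[r''+1,m''_{\tmop{last},l-1}-1]]$ for $l\in[[l_{\tmop{last}-1}+1,l_{\tmop{last}}]]$ is contained in $[[r''+1,r'']]=\emptyset$, so those sub-levels contribute nothing; meanwhile the first $l_{\tmop{last}-1}+1$ sub-levels together lie in the intersection of $[[a,(x''_{\tmop{last}-1}+1)\,2^{k-l_{\tmop{last}-1}}-2]]$ with the multiples of $2^{k-l_{\tmop{last}-1}}$, namely in $\{a\}$, so they contribute at most $1$; thus $|\bigcup_{t\in Y_{\tmop{last}}}U_t|\le 1\le l_{\tmop{last}}-l_{\tmop{last}-1}$.

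The main obstacle is the interval bookkeeping: one must justify that $\tmop{ub}_{\min Y_{\tmop{last}}}=m''_{\tmop{last}-1}-1$ (so the first-block window really is $[[r''+1,m''_{\tmop{last}-1}-1]]$) and that $m''_{\tmop{last},l'}$ is non-increasing in $l'$, check carefully that $a\le n-1$ so that Lemma~\ref{m-lemma2} may be applied with the candidate $a$, and make sure the two cases are exhaustive and that $l_{\tmop{last}}-l_{\tmop{last}-1}\ge 1$ so the bound $1$ obtained in the second case suffices. A pleasant feature is that, unlike in Lemma~\ref{last-lemma1}, no separate ``saturation''/``Moreover'' step is needed here: when $r''<a$ the windows above the first block collapse to $\emptyset$ outright, and when $a\le r''$ the first block is already empty, so the level-by-level bound of Lemma~\ref{sublevel-lemma2} on the remaining $l_{\tmop{last}}-l_{\tmop{last}-1}$ sub-levels is exactly tight.
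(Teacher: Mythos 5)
Your proof is correct and follows essentially the same route as the paper's: the same case split on whether $r''$ lies below or at least $a=(\tmop{bin}(x''_{\tmop{last}-1})(0)^{l_{\tmop{last}}-l_{\tmop{last}-1}})_2$, yielding the bound $1$ in the first case (the paper phrases this as ``the first index encountered in the window is $r''+1$,'' which is your observation that $m''_{\tmop{last},l_{\tmop{last}-1}}\le r''+1$ collapses all later windows) and the bound $l_{\tmop{last}}-l_{\tmop{last}-1}$ via Lemma~\ref{sublevel-lemma2} in the second, after noting the first $l_{\tmop{last}-1}+1$ sub-levels contribute nothing because the window contains no multiple of $2^{k-l_{\tmop{last}-1}}$. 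Your bookkeeping (e.g.\ $r''\le n-2$ forcing $a\le n-1$, and $r''=a-1$ in the first case) is consistent with, and slightly more explicit than, the paper's.
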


\begin{proof}
  Recall that $m''_{\tmop{last} - 1} \leq p''_{\tmop{last} - 1} = (\tmop{bin}
  (x_{\tmop{last} - 1}'') (1)^{l_{\tmop{last}} - l_{\tmop{last} - 1}})_2$. We
  have $r'' \geq (x_{\tmop{last} - 1}'' - 1) \cdot 2^{k - l_{\tmop{last} - 1}}
  + ((1)^{l_{\tmop{last}} - l_{\tmop{last} - 1}})_2$.
  
  Consider the case $r'' < (\tmop{bin} (x''_{\tmop{last} - 1})
  (0)^{l_{\tmop{last}} - l_{\tmop{last} - 1}})_2$. Then $r'' = (x_{\tmop{last}
  - 1}'' - 1) \cdot 2^{k - l_{\tmop{last} - 1}} + ((1)^{l_{\tmop{last}} -
  l_{\tmop{last} - 1}})_2$ and, since $\tmop{rev}_k (\min \{t \in
  Y_{\tmop{last}} \; | \; r'' < \tmop{rev}_k (t) < p''_{\tmop{last} - 1} \}) =
  (\tmop{bin} (x''_{\tmop{last} - 1}) (0)^{l_{\tmop{last}} - l_{\tmop{last} -
  1}})_2 = r'' + 1$, we have $\bigcup_{t \in Y_{\tmop{last}}} U_t \subseteq
  \{r'' + 1\}$. (In other words: the first time slot $t$ in $Y_{\tmop{last}}$
  such that $r'' < \tmop{rev}_k (t) < p''_{\tmop{last} - 1}$ is such that
  $\tmop{rev}_k (t) = r'' + 1$.)
  
  Otherwise, we have the case $(\tmop{bin} (x''_{\tmop{last} - 1})
  (0)^{l_{\tmop{last}} - l_{\tmop{last} - 1}})_2 \leq r''$. We also have
  $m''_{\tmop{last} - 1} \leq (\tmop{bin} (x''_{\tmop{last} - 1})
  (1)^{l_{\tmop{last}} - l_{\tmop{last} - 1}})_2$. Thus,
  \[ \bigcup_{t \in Y_{\tmop{last}}} U_t \subseteq \{(\tmop{bin}
     (x''_{\tmop{last} - 1}) \gamma)_2 \; | \; | \gamma | = l_{\tmop{last}} -
     l_{\tmop{last} - 1} \wedge (\gamma)_2 > 0\}. \]
  Hence, $\bigcup_{t \in Y_{\tmop{last}}} U_t \subseteq \bigcup_{l \in
  [[l_{\tmop{last} - 1} + 1, l_{\tmop{last}}]]} X_{\tmop{last}, l}$. Thus,
  $\bigcup_{l \in [[0, l_{\tmop{last} - 1}]]} \bigcup_{t \in Y_{\tmop{last},
  l}} U_t = \emptyset$. By Lemma~\ref{sublevel-lemma2}, \ $| \bigcup_{l \in
  [[l_{\tmop{last} - 1} + 1, l_{\tmop{last}}]]} \bigcup_{t \in Y_{i + 1}, l}
  U_t | \leq l_{\tmop{last}} - l_{\tmop{last} - 1}$. \
  
  Thus, in any case, $| \bigcup_{t \in Y_{\tmop{last}}} U_t | \leq \max \{1,
  l_{\tmop{last}} - l_{\tmop{last} - 1} \}= l_{\tmop{last}} - l_{\tmop{last} -
  1}$.
\end{proof}

\begin{lemma}
  \label{steps-lemma}If $t'' = s$ and, for some $i \in [[0, \tmop{last} -
  1]]$, $x''_{i + 1} \geq (\tmop{bin} (x''_i) (0)^{l_{i + 1} - l_i - 1}
  (1))_2$ and $| \bigcup_{t \in Y_{i + 1}} U_t | \geq l_{i + 1} - l_i$, then
  $| \bigcup_{t \in Y_{i + 1}} U_t | = l_{i + 1} - l_i$ and $x''_{i + 1} =
  (\tmop{bin} (x''_i) (0)^{l_{i + 1} - l_i - 1} (1))_2$ and $m''_{i + 1} =
  p''_{i + 1}$.
\end{lemma}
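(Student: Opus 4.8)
The plan is to analyze the structure forced by the two hypotheses on a single block $Y_{i+1}$, tracking how the values $m''_{i+1,l}$ must evolve across the sublevels $Y_{i+1,l}$. First I would recall from the proof of Lemma~\ref{middle-lemma2} the dichotomy on whether $r'' < (\tmop{bin}(x''_i)(0)^{l_{i+1}-l_i}\beta_{i+1})_2$ or not. In the second case that proof already shows $|\bigcup_{t\in Y_{i+1}} U_t| \le l_{i+1}-l_i$ directly, so with the hypothesis $|\bigcup_{t\in Y_{i+1}} U_t| \ge l_{i+1}-l_i$ we get equality for free; what remains is to pin down $x''_{i+1}$ and to show $m''_{i+1}=p''_{i+1}$. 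In the first case the same proof gives the sharper bound $\bigcup_{t\in Y_{i+1}} U_t \subseteq \{a,b\}$ with $a=(\tmop{bin}(x''_i)(0)^{l_{i+1}-l_i}\beta_{i+1})_2$, so $|\bigcup_{t\in Y_{i+1}} U_t|\le 2$; combined with $l_{i+1}-l_i\le |\bigcup_{t\in Y_{i+1}} U_t|$ this forces either $l_{i+1}-l_i\le 2$, and one checks each subcase, or contradicts the hypothesis.

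Next I would use the second hypothesis, $x''_{i+1} \ge (\tmop{bin}(x''_i)(0)^{l_{i+1}-l_i-1}(1))_2$, to locate $m''_{i+1,l_i}$ precisely. Writing $p''_{i+1} = (\tmop{bin}(x''_{i+1})\beta_{i+1})_2$ and recalling $\beta_{i+1}=(1)\alpha_i$ while $\beta_i = (1)^{l_{i+1}-l_i}(0)\alpha_i$, the element $a$ above has $\lfloor a/2^{k-l_{i+1}}\rfloor = (\tmop{bin}(x''_i)(0)^{l_{i+1}-l_i-1}(1))_2$. Since $m''_{i+1,l_i} \le a$ (noted in Lemma~\ref{middle-lemma2}'s proof) and $m''_{i+1}\le p''_{i+1}$ with $x''_{i+1}$ bounded below by exactly $\lfloor a/2^{k-l_{i+1}}\rfloor$, the only way the block can contribute $l_{i+1}-l_i$ distinct new indices — one per sublevel $Y_{i+1,l}$ for $l\in[[l_i+1,l_{i+1}]]$, plus the one in the low sublevels — is if each $m''_{i+1,l}$ decreases by exactly one bit-position at a time, which forces $m''_{i+1,l_i}=a$ and then $x''_{i+1} = \lfloor a/2^{k-l_{i+1}}\rfloor = (\tmop{bin}(x''_i)(0)^{l_{i+1}-l_i-1}(1))_2$, matching the lower bound with equality. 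Tracing the same chain down through $l = l_i+1, \dots, l_{i+1}$ shows $m''_{i+1,l_{i+1}} = m''_{i+1} = (\tmop{bin}(x''_{i+1})\beta_{i+1})_2 = p''_{i+1}$.

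The main obstacle I anticipate is bookkeeping: carefully justifying that ``$l_{i+1}-l_i$ distinct new right-side indices in one block'' leaves no slack, i.e. that the inclusions $\bigcup_{t\in Y_{i+1,l}} U_t \subseteq X_{i+1,l}\cap[[r''+1,m''_{i+1,l-1}-1]]$ from Lemma~\ref{sublevel-lemma2} together with $|X_{i+1,l}\cap[[r''+1,m''_{i+1,l-1}-1]]|\le 1$ force, when summed to the maximum, that every intersection is a singleton and the singletons chain together in the unique bit-incremental way. This is the same kind of tight extremal argument already carried out at the end of Lemma~\ref{middle-lemma1} and Lemma~\ref{last-lemma1} for the left side, so I would model the argument on those, just with inequalities reversed. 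Once the chain is forced to be the unique tight one, the three conclusions ($|\bigcup_{t\in Y_{i+1}}U_t| = l_{i+1}-l_i$, the exact value of $x''_{i+1}$, and $m''_{i+1}=p''_{i+1}$) all read off simultaneously.
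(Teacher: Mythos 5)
Your overall skeleton (the dichotomy from Lemma~\ref{middle-lemma2}, the singleton bound from Lemma~\ref{sublevel-lemma2}, and a forced bit-by-bit chain of $p''_{i+1,l}$'s) matches the paper's proof, but the pivotal deduction is missing and two of your intermediate claims are false, so the argument as written does not close. The key point is that the hypothesis $x''_{i+1} \geq (\tmop{bin} (x''_i) (0)^{l_{i+1}-l_i-1} (1))_2$ \emph{eliminates} the first case of the dichotomy rather than requiring a subcase analysis there: with $a = (\tmop{bin}(x''_i)(0)^{l_{i+1}-l_i}\beta_{i+1})_2 \in \mathbbm{X}_{i+1}$ one has $\lfloor a/2^{k-l_{i+1}}\rfloor = (\tmop{bin}(x''_i)(0)^{l_{i+1}-l_i})_2$, which is \emph{one less} than your claimed value $(\tmop{bin}(x''_i)(0)^{l_{i+1}-l_i-1}(1))_2$; hence the hypothesis gives $p''_{i+1} > a$, i.e.\ $a \leq r''$, which is exactly the second case. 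With your off-by-one value the first case would yield $x''_{i+1} \leq \lfloor a/2^{k-l_{i+1}}\rfloor$ \emph{equal} to the hypothesized lower bound rather than strictly below it, so the contradiction you gesture at ("or contradicts the hypothesis") would not materialize, and the "check each subcase for $l_{i+1}-l_i \leq 2$" branch is a dead end.

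Second, once $a \leq r''$ is in hand, the low sublevels contribute \emph{nothing}: the only element of $\bigcup_{l \in [[0,l_i]]} X_{i+1,l} \subseteq \mathbbm{X}_{i+1,l_i}$ in $[[r''+1, m''_i-1]]$ would have to be $a$, which is now $\leq r''$. So $m''_{i+1,l_i} = m''_i$, not $a$ as you assert ($a$ can never equal an $\tmop{ub}+1$ value here, all of which exceed $r''$), and your count "one per sublevel $l \in [[l_i+1,l_{i+1}]]$ plus the one in the low sublevels" adds to $l_{i+1}-l_i+1$, contradicting the very equality you are trying to prove. The correct accounting is: zero from the low sublevels, forcing each of the $l_{i+1}-l_i$ high sublevels to contribute exactly its $p''_{i+1,l}$, whereupon the induction you describe does pin down $p''_{i+1,l} = (\tmop{bin}(x''_i)(0)^{l-l_i-1}(1)(0)^{l_{i+1}-l}\beta_{i+1})_2$ and hence $p''_{i+1} = a + 2^{k-l_{i+1}}$, $x''_{i+1} = (\tmop{bin}(x''_i)(0)^{l_{i+1}-l_i-1}(1))_2$ and $m''_{i+1}=p''_{i+1}$. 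A further small gap: the lemma allows $i = \tmop{last}-1$, where $\beta_{i+1}$ is empty and the identities $\beta_i = (1)^{l_{i+1}-l_i}(0)\alpha_i$, $\beta_{i+1}=(1)\alpha_i$ you lean on (and Lemma~\ref{middle-lemma2} itself) do not apply; the paper covers this uniformly by writing $\beta_i = (1)^{l_{i+1}-l_i}\gamma$ with $(\gamma)_2 \leq (\beta_{i+1})_2$ via Lemma~\ref{suffix-lemma}.
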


\begin{proof}
  We have $l_{i + 1} - l_i \geq 1$, and $(\tmop{bin} (x''_i) (0)^{l_{i + 1} -
  l_i} \beta_{i + 1})_2 \leq (\tmop{bin} (x''_{i + 1} - 1) \beta_{i + 1})_2
  \leq r'' < p''_i = (\tmop{bin} (x''_i) (1)^{l_{i + 1} - l_i} \gamma)_2$,
  where $| \gamma | = | \beta_{i + 1} |$ and, by Lemma~\ref{suffix-lemma},
  $(\gamma)_2 \leq (\beta_{i + 1})_2$.
  
  We have
  \[ \bigcup_{l \in [[0, l_i]]} \bigcup_{t \in Y_{i + 1, l}} U_t \subseteq
     \bigcup_{l \in [[0, l_i]]} X_{i + 1, l} \cap [[r'' + 1, m''_i - 1]] =
     \emptyset, \]
  since $(\tmop{bin} (x''_i) (0)^{l_{i + 1} - l_i} \beta_{i + 1})_2 \leq r''$
  and $m''_i \leq p''_i < (\tmop{bin} (x''_i + 1) (0)^{l_{i + 1} - l_i}
  \beta_{i + 1})_2$. Thus, we have $m''_{i + 1, l_i} = m''_i$.
  
  Consider the case $| \bigcup_{l \in [[l_i + 1, l_{i + 1}]]} \bigcup_{t \in
  Y_{i + 1, l}} U_t | \geq l_{i + 1} - l_i$: By Lemma~\ref{sublevel-lemma2},
  for each $l \in [[l_i + 1, l_{i + 1}]]$, $\bigcup_{t \in Y_{i + 1}, l} U_t
  \subseteq \{p''_{i + 1, l} \}$. Thus we have, for each $l \in [[l_i + 1,
  l_{i + 1}]]$, $\bigcup_{t \in Y_{i + 1, l}} U_t =\{p''_{i + 1, l} \}$ and,
  hence, $| \bigcup_{l \in [[l_i + 1, l_{i + 1}]]} \bigcup_{t \in Y_{i + 1,
  l}} U_t | = l_{i + 1} - l_i$. This implies that, for each $l \in [[l_i, l_{i
  + 1} - 1]]$, $m''_{i + 1, l} > m''_{i + 1, l + 1} = p''_{i + 1, l + 1}$.
  
  We have $p''_i = (\tmop{bin} (x''_i) (1)^{l_{i \text{} + 1} - l_i}
  \gamma)_2 \leq (\tmop{bin} (x''_i + 1) (0)^{l_{i + 1} - l_i} \beta_{i +
  1})_2$ and $(\tmop{bin} (x''_i) (0)^{l_{i + 1} - l_i} \beta_{i + 1})_2 \leq
  r''$. Thus, $p''_{i + 1, l_i + 1} = (\tmop{bin} (x''_i) (1) (0)^{l_{i + 1} -
  l_i - 1} \beta_{i + 1})_2 $. Note also that, for $l \in [[l_i + 1, l_{i + 1}
  - 1]]$, we have that if $p''_{i + 1, l} = (\tmop{bin} (x''_i) (0)^{l - l_i -
  1} (1) (0)^{l_{i + 1} - l} \beta_{i + 1})_2$, then
  \begin{eqnarray*}
    \{p''_{i + 1, l + 1} \} & = & \bigcup_{t \in Y_{i + 1, l + 1}} U_t\\
    & \subseteq & X_{i + 1, l + 1} \cap [[r'' + 1, p''_{i + 1, l} - 1]]\\
    & \subseteq & X_{i + 1, l + 1} \cap [[(\tmop{bin} (x''_i) (0)^{l_{i + 1}
    - l_i} \beta_{i + 1})_2, p''_{i + 1, l} - 1]]\\
    & = & \{a_{l + 1} \},
  \end{eqnarray*}
  where $a_{l + 1} = (\tmop{bin} (x''_i) (0)^{l - l_i} (1) (0)^{l_{i + 1} - l
  - 1} \beta_{i + 1})_2$, and, hence, \ $p''_{i + 1, l + 1} = a_{l + 1}$.
  Thus, by induction, we have $x''_{i + 1, l_{i + 1}} = x''_{i + 1} =
  (\tmop{bin} (x''_i) (0)^{l_{i + 1} - l_i - 1} (1))_2$.
\end{proof}

\begin{lemma}
  \label{bin-lemma}If $t'' = s$ and, for some $i \in [[0, \tmop{last} - 1]]$,
  $l_{i + 1} = l_i + 1$ and $| \bigcup_{t \in Y_{i + 1}} U_t | = 2$, then $l_i
  > 0$, $x''_i > 0$ and $(\tmop{bin} (x''_{i + 1}))_2 = (\tmop{bin} (x''_i -
  1) (1))_2$ and $m''_{i + 1} = p''_{i + 1}$.
\end{lemma}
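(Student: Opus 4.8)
The plan is to assemble the case analysis already carried out inside the proof of Lemma~\ref{middle-lemma2}. Since $l_{i+1}-l_i=1$, that lemma's bound is $\max\{l_{i+1}-l_i,2\}=2$, and the value $2$ can be reached only in the first of the two cases considered there, namely $r''<a$ with $a=(\tmop{bin}(x''_i)(0)\beta_{i+1})_2$, because in the complementary case $a\le r''$ the same argument yields the bound $l_{i+1}-l_i=1<2$. I would also note, via Lemma~\ref{last-lemma2}, that $i=\tmop{last}-1$ is impossible under the hypotheses (that lemma, applicable since $\tmop{last}=i+1>0$, would force the count to be $\le l_{\tmop{last}}-l_{\tmop{last}-1}=1$), so $i\le\tmop{last}-2$ and we may use $\beta_i=(1)(0)\alpha_i$ and $\beta_{i+1}=(1)\alpha_i$. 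Thus we are in the case $r''<a$, and by the argument of Lemma~\ref{middle-lemma2}, $\bigcup_{t\in Y_{i+1}}U_t\subseteq\{a,b\}$ with $b=(x''_i-1)\cdot 2^{k-l_i}+((1)\beta_{i+1})_2=(\tmop{bin}(x''_i-1)(1)\beta_{i+1})_2$; moreover $a$ can only arise from some $Y_{i+1,l}$ with $l\le l_i$, and $b$ only from $Y_{i+1,l_i+1}=Y_{i+1,l_{i+1}}$. Since the count equals $2$, both values are attained, so $\bigcup_{t\in Y_{i+1}}U_t=\{a,b\}\subseteq X_{i+1}\subseteq[[0,n-1]]$; in particular $a,b$ are valid indices with $0\le b<a\le n-1$ (note $a-b=2^{k-l_i-1}>0$).

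Next I would read off the two ``boundary'' conclusions. For $x''_i>0$: if $x''_i=0$ then $b=-2^{k-l_i-1}+(\beta_{i+1})_2<0$, because $(\beta_{i+1})_2<2^{|\beta_{i+1}|}=2^{k-l_i-1}$, contradicting $b\ge 0$; hence $x''_i>0$. For $l_i>0$: if $l_i=0$ then $\mathbbm{X}_i$ has spacing $2^k$, so $p''_i\le r''+2^k\le 2^{k+1}-1$ and $x''_i=\lfloor p''_i/2^k\rfloor\le 1$; together with $x''_i>0$ this forces $x''_i=1$, whence $a=((1)(0)\beta_{i+1})_2=2^k+(\beta_{i+1})_2\ge n>n-1$, contradicting $a\le n-1$. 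Hence $l_i>0$.

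For $m''_{i+1}=p''_{i+1}$: since the $a$-update occurs in some $Y_{i+1,l}$ with $l<l_{i+1}$ and the $b$-update in $Y_{i+1,l_{i+1}}$, the $b$-update is the last change of $\tmop{ub}$ within $Y_{i+1}$; as $b<a$ this gives $m''_{i+1}=\tmop{ub}_{\max Y_{i+1}+1}+1=b$. It remains to identify $b$ with $p''_{i+1}$. Here $b\in\mathbbm{X}_{i+1}$ (since $b\tmop{mod} 2^{k-l_{i+1}}=(\beta_{i+1})_2$) and $b>r''$, while the element of $\mathbbm{X}_{i+1}$ just below $b$, namely $b-2^{k-l_i-1}=(\tmop{bin}(x''_i-1)(0)\beta_{i+1})_2=(x''_i-1)2^{k-l_i}+(\beta_{i+1})_2$, is strictly smaller than $(\tmop{bin}(x''_i-1)\beta_i)_2=p''_i-2^{k-l_i}\le r''$, using $(\beta_{i+1})_2=2^{|\alpha_i|}+(\alpha_i)_2<2^{|\alpha_i|+1}+(\alpha_i)_2=(\beta_i)_2$. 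Hence $p''_{i+1}=b=m''_{i+1}$, and then $x''_{i+1}=\lfloor p''_{i+1}/2^{k-l_{i+1}}\rfloor=\lfloor b/2^{k-l_i-1}\rfloor=2x''_i-1=(\tmop{bin}(x''_i-1)(1))_2=(\tmop{bin}(x''_{i+1}))_2$, which is the last claimed equality.

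The main obstacle is bookkeeping rather than any new idea: one must be sure that in the count-$2$ situation each of $a$ and $b$ is realized inside the specific sublevel $Y_{i+1,l}$ claimed (so that the $b$-update genuinely comes last and $m''_{i+1}=b$), and one must keep the binary-string identities involving $\beta_i$, $\beta_{i+1}$ and $\alpha_i$ straight. Since all of these facts are essentially present in the proof of Lemma~\ref{middle-lemma2}, the task here reduces to assembling them and extracting the four equalities.
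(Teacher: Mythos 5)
Your proof is correct, and its core is the same as the paper's: in the count-$2$ situation one pins $\bigcup_{t\in Y_{i+1}}U_t$ down to the two-element set $\{a,b\}$ with $a=(\tmop{bin}(x''_i)(0)\beta_{i+1})_2$ realized only in $\bigcup_{l\le l_i}Y_{i+1,l}$ and $b=(\tmop{bin}(x''_i-1)(1)\beta_{i+1})_2$ realized only in $Y_{i+1,l_{i+1}}$, whence $m''_{i+1}=p''_{i+1}=b$ and $x''_{i+1}=2x''_i-1$. The differences are peripheral, and yours are arguably cleaner. The paper proves $l_i>0$ by observing that $l_i=0$ forces $i=0$ and inspecting the two elements of $X_1$ directly, and proves $x''_i>0$ by showing that $x''_i=0$ traps the union in $\{\min X_{i+1}\}$ via Lemma~\ref{suffix-lemma}; you obtain both from the single observation that the forced memberships $b\ge 0$ and $a\le n-1$ fail when $x''_i=0$ or $l_i=0$. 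The paper also treats all $i\in[[0,\tmop{last}-1]]$ uniformly by writing $\beta_i=(1)\gamma$ with $(\gamma)_2\le(\beta_{i+1})_2$ (Lemma~\ref{suffix-lemma}), whereas you exclude $i=\tmop{last}-1$ via Lemma~\ref{last-lemma2} so as to be able to write $\beta_i=(1)(0)\alpha_i$; both work. One presentational caveat: the inclusion $\bigcup_{t\in Y_{i+1}}U_t\subseteq\{a,b\}$ together with the sublevel attribution is established inside the \emph{proof} of Lemma~\ref{middle-lemma2}, not in its statement, so a self-contained write-up would have to restate or re-derive those two containments rather than cite that lemma; the paper indeed re-derives them (with its $a_i$, $b_i$) inside its own proof of this lemma.
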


\begin{proof}
  If $l_i = 0$ and $l_{i + 1} = 1$, then, we have \ $i = 0$ (only $Y_0$ may
  have size $2^0$) and, since $t'' = s$, $m''_i \in X_i$, and, since $k \geq
  2$, for some $\alpha$, $X_i =\{((1) (0) \alpha)_2 \}$ and $X_{i + 1} =\{((0)
  (1) \alpha)_2, ((1) (1) \alpha)_2 \}$ and $((1) (1) \alpha)_2 > m''_i = ((1)
  (0) \alpha)_2$ and, hence, $| \bigcup_{t \in Y_{i + 1}} U_t | \leq 1$. Thus,
  we have $l_i > 0$.
  
  If $x''_i = 0$, then $m''_i \leq p''_i = \min X_i = ((0)^{l_i} \beta_i)_2$,
  where $\beta_i = (1) \gamma$ and $| \gamma | = | \beta_{i + 1} |$ and, by
  Lemma~\ref{suffix-lemma}, $(\gamma)_2 \leq (\beta_{i + 1})_2$. Thus $m''_i
  \leq ((0)^{l_i} (1) \beta_{i + 1})_2$ and, hence, $\bigcup_{t \in Y_{i + 1}}
  U_t \subseteq X_{i + 1} \cap [[r'' + 1, m''_i - 1]] \subseteq \{((0)^{l_i}
  (0) \beta_{i + 1})_2 \}=\{\min X_{i + 1} \}$. Thus we have $x''_i > 0$.
  
  We have $r'' \geq (\tmop{bin} (x''_i - 1) \beta_i)_2 \geq (\tmop{bin}
  (x''_i - 1) (1) (0)^{k - l_i - 1})_2 \geq a_i$, where $a_i = (\tmop{bin}
  (x''_i - 1) (0) \beta_{i + 1})_2$, since $(\beta_i)_2 \geq ((1) (0)^{k - l_i
  - 1})_2$.
  
  We have $p''_i = (\tmop{bin} (x''_i) \beta_i)_2 \leq b_i$, where $b_i =
  (\tmop{bin} (x''_i) (1) \beta_{i + 1})_2$, since $\beta_i = (1) \gamma$,
  where $| \gamma | = | \beta_{i + 1} |$ and, by Lemma~\ref{suffix-lemma},
  $\gamma \leq \beta_{i + 1}$ .
  
  If $r'' \geq (\tmop{bin} (x''_i - 1) (1) \beta_{i + 1})_2$, then $\bigcup_{t
  \in Y_{i + 1}} U_t \subseteq X_{i + 1} \cap [[r'' + 1, b_i - 1]] \subseteq$
  $\{(\tmop{bin} (x''_i) (0) \beta_{i + 1})_2 \}$. Thus, we have $a_i \leq r''
  < (\tmop{bin} (x''_i - 1) (1) \beta_{i + 1})_2$. In this case, \
  \begin{eqnarray*}
    \bigcup_{l \in [[0, l_i]]} X_{i + 1, l} \cap [[r'' + 1, m''_i - 1]] &
    \subseteq & \bigcup_{l \in [[0, l_i]]} X_{i + 1, l} \cap [[a_i + 1, b_i -
    1]]\\
    & = & \{(\tmop{bin} (x''_i) (0) \beta_{i + 1})_2 \}\\
    & = & \{p''_{i + 1, l_i} \}
  \end{eqnarray*}
  and $\bigcup_{l \in [[l_i + 1, l_{i + 1}]]} X_{i + 1, l} \cap [[r'' + 1,
  p''_{i + 1, l_i} - 1]] \subseteq \{(\tmop{bin} (x''_i - 1) (1) \beta_{i +
  1})_2 \}$. Hence, $| \bigcup_{t \in Y_{i + 1}} U_t | = 2$ implies \ that
  $p''_{i + 1} = p''_{i + 1, l_i + 1} = (\tmop{bin} (x''_i - 1) (1) \beta_{i +
  1})_2$ and $m''_{i + 1} = p''_{i + 1}$.
\end{proof}

\begin{lemma}
  \label{patch-lemma}If $t'' = s$ and, for some $i \in [[0, \tmop{last} -
  2]]$, $x''_{i + 1} = (\tmop{bin} (x''_i - 1) (1))_2$ and $m''_{i + 1} =
  p''_{i + 1}$ and, for some $d$ such that $d \geq 0$ and $i + 2 + d \leq
  \tmop{last}$, we have, for each $c \in [[0, d]]$, $| \bigcup_{t \in Y_{i + 2
  + c}} U_t | \geq l_{i + 2 + c} - l_{i + 1 + c}$, then, \ for each $c \in
  [[0, d]]$, we have $| \bigcup_{t \in Y_{i + 2 + c}} U_t | = l_{i + 2 + c} -
  l_{i + 1 + c} \leq 2$ and $x''_{i + 2 + c} = (\tmop{bin} (x''_{i + 1})
  \gamma_{i + 1} \ldots \gamma_{i + 1 + c})_2$, where $\gamma_j = (0)
  (1)^{l_{j + 1} - l_j - 1}$, and $m''_{i + 2 + c} = p''_{i + 2 + c}$.
\end{lemma}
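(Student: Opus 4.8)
The plan is to prove the lemma by induction on $c \in [[0, d]]$, propagating the invariant that $m''_{i+1+c} = p''_{i+1+c}$ together with the displayed $l_{i+1+c}$-bit form of $x''_{i+1+c}$; the hypothesis of the lemma is precisely the ``$c = 0$'' instance of this invariant (with the empty concatenation, so that $x''_{i+1} = (\tmop{bin}(x''_i - 1)(1))_2$ and $m''_{i+1} = p''_{i+1}$), and each step of the induction moves the invariant from level $i+1+c$ to level $i+2+c$ by examining a single block $Y_{i+2+c}$. The one extra ingredient I would use throughout is the bound on $r''$ that holds by the very definition of the $p''$'s: for every level $j$ the predecessor of $p''_j$ inside $\mathbbm{X}_j$ is $\leq r''$, i.e. $r'' \geq p''_j - 2^{k-l_j} = (\tmop{bin}_{l_j}(x''_j - 1)\,\beta_j)_2$. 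Applying this at level $j = i+1+c$ and substituting the \emph{known} binary form of $x''_{i+1+c}$ (hence of $x''_{i+1+c}-1$) yields an estimate on $r''$ far stronger than the generic ``one period below $p''$'' one, and, together with $r'' < p''_{i+1+c}$, it pins $r''$ into a short interval just below $p''_{i+1+c} = m''_{i+1+c}$.

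Given the invariant at level $i+1+c$, the analysis of $Y_{i+2+c}$ is then of exactly the kind carried out in the proofs of Lemmas~\ref{middle-lemma2}, \ref{bin-lemma} and \ref{steps-lemma}. Using the standard relations $\beta_{i+1+c} = (1)^{l_{i+2+c}-l_{i+1+c}}(0)\alpha_{i+1+c}$, $\beta_{i+2+c} = (1)\alpha_{i+1+c}$ and Lemma~\ref{suffix-lemma} (or, when $i+2+c = \tmop{last}$, the degenerate versions $|\beta_{\tmop{last}}| = 0$ and the argument of Lemma~\ref{last-lemma2}), I would write the two endpoints $(\tmop{bin}_{l_{i+1+c}}(x''_{i+1+c}-1)\,\beta_{i+1+c})_2 \leq r'' < p''_{i+1+c} = (\tmop{bin}_{l_{i+1+c}}(x''_{i+1+c})\,\beta_{i+1+c})_2$ as $k$-bit strings sharing a long common prefix, and read off, sub-level by sub-level, which of the sets $X_{i+2+c,l}$ can meet the update window $[[r''+1, m''_{i+1+c}-1]] = [[r''+1, p''_{i+1+c}-1]]$; Lemma~\ref{sublevel-lemma2} caps each such sub-level's contribution to $U_t$ at one. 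The content of the lemma is that this enumeration always leaves at most two elements in $\bigcup_{t \in Y_{i+2+c}} U_t$ (and at most one when $l_{i+2+c} = l_{i+1+c}+1$), and that in the non-degenerate case $l_{i+2+c} - l_{i+1+c} \leq 2$ these are exactly the $l_{i+2+c}-l_{i+1+c}$ successive minima $p''_{i+2+c,l}$, $l \in [[l_{i+1+c}+1, l_{i+2+c}]]$, whose last term $p''_{i+2+c}$ has $l_{i+2+c}$-bit prefix $\tmop{bin}_{l_{i+1+c}}(x''_{i+1+c})\,(0)(1)^{l_{i+2+c}-l_{i+1+c}-1} = \tmop{bin}_{l_{i+1+c}}(x''_{i+1+c})\,\gamma_{i+1+c}$; for $l_{i+2+c} - l_{i+1+c} = 2$ one may in fact finish by invoking Lemma~\ref{steps-lemma} at index $i+1+c$ once its hypothesis $x''_{i+2+c} \geq (\tmop{bin}(x''_{i+1+c})(0)(1))_2$ has been checked, while the case $l_{i+2+c} = l_{i+1+c}+1$ is handled exactly as in Lemma~\ref{bin-lemma}.

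These two halves combine to complete the step. Since $|\bigcup_{t \in Y_{i+2+c}} U_t| \geq l_{i+2+c} - l_{i+1+c} \geq 1$ by hypothesis but is $\leq 2$ by the enumeration, we get $l_{i+2+c} - l_{i+1+c} \leq 2$ and $|\bigcup_{t \in Y_{i+2+c}} U_t| = l_{i+2+c} - l_{i+1+c}$; the enumeration then also gives $x''_{i+2+c} = (\tmop{bin}(x''_{i+1+c})\,\gamma_{i+1+c})_2$, which by the induction hypothesis equals $(\tmop{bin}(x''_{i+1})\,\gamma_{i+1}\cdots\gamma_{i+1+c})_2$. Finally $m''_{i+2+c} = p''_{i+2+c}$: the first change inside $Y_{i+2+c}$ (which exists, by the hypothesis) witnesses an element of $X_{i+2+c}$ strictly between $r''$ and $m''_{i+1+c} = p''_{i+1+c}$, so $p''_{i+2+c} < p''_{i+1+c}$ and $p''_{i+2+c} \in X_{i+2+c}$ is itself hit inside $Y_{i+2+c}$, while by Lemma~\ref{m-lemma2} no hit index strictly between $r''$ and $p''_{i+2+c}$ appears in blocks $Y_0, \ldots, Y_{i+1+c}$. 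The main obstacle is the combinatorial bookkeeping of that enumeration --- reading $X_{i+2+c} \cap [[r''+1, p''_{i+1+c}-1]]$ off the shared-prefix descriptions of the endpoints and proving the ``at most two'' cap --- and, within it, coping with a run of consecutive levels $j$ on which $l_{j+1} = l_j+1$: there each $\gamma_j$ is the single bit $0$, so the low end of $x''_{i+1+c}$ is a block of zeros that the ``$-1$'' turns, through a chain of borrows, into a block of ones. The endpoint $i+2+c = \tmop{last}$ requires the familiar Lemma~\ref{last-lemma2}-style adjustment but introduces nothing essentially new.
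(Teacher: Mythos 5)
Your overall architecture (induction on $c$, propagating $m''_{i+1+c}=p''_{i+1+c}$ together with the binary form of $x''_{i+1+c}$, and reducing each step to a block analysis via Lemmas~\ref{sublevel-lemma2}, \ref{steps-lemma} and \ref{bin-lemma}) is the same as the paper's. But your ``one extra ingredient'' is too weak, and the weakness sits exactly at the crux of the lemma. The bound $r''\geq p''_{i+1+c}-2^{k-l_{i+1+c}}=(\tmop{bin}(x''_{i+1+c}-1)\,\beta_{i+1+c})_2$ is the generic one-period-below bound no matter how you rewrite $x''_{i+1+c}-1$ in binary; substituting the known form of $x''_{i+1+c}$ does not strengthen it. What the paper actually uses (its Equation~(\ref{bottom})) is the predecessor bound taken once at the \emph{coarse} level $i$, namely $r''\geq(\tmop{bin}(x''_i-1)\beta_i)_2=(\tmop{bin}(x''_{i+1})(0)\alpha_i)_2$, which, rewritten at level $i+1+c$ via the inductive form of $x''_{i+1+c}$, reads $r''\geq(\tmop{bin}(x''_{i+1+c})\,(0)(1)^{q}\,\gamma_{i+2+c}\cdots)_2$ with $q=l_{i+2+c}-l_{i+1+c}-1$, whereas yours reads $r''\geq(\tmop{bin}(x''_{i+1+c}-1)\,(1)^{q+1}\,\gamma_{i+2+c}\cdots)_2$. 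These differ by exactly $2^{k-l_{i+1+c}-1}$, and that half-period is precisely the needed margin. Concretely, in the critical case $l_{i+2+c}=l_{i+1+c}+1$ the open period below $p''_{i+1+c}$ contains two elements of $\mathbbm{X}_{i+2+c}$, with $(l_{i+1+c}+1)$-bit prefixes $\tmop{bin}(x''_{i+1+c}-1)(1)$ and $\tmop{bin}(x''_{i+1+c})(0)$; your bound leaves $r''$ free to lie below both, in which case Lemma~\ref{bin-lemma} permits $|\bigcup_{t\in Y_{i+2+c}}U_t|=2$ and $x''_{i+2+c}=(\tmop{bin}(x''_{i+1+c}-1)(1))_2$ --- a second ``bad event'', which is exactly what the lemma must exclude (and what the proof of Lemma~\ref{right-side-lemma} relies on it excluding). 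The paper's bound places $r''$ at or above the lower of the two elements and thereby removes it from the update window.

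The same gap undermines your claim that the enumeration ``always leaves at most two elements,'' from which you deduce $l_{i+2+c}-l_{i+1+c}\leq 2$. Under the generic level-$(i+1+c)$ bound the window is essentially a full period of $\mathbbm{X}_{i+1+c}$, containing $2^{l_{i+2+c}-l_{i+1+c}}$ elements of $\mathbbm{X}_{i+2+c}$, and Lemma~\ref{middle-lemma2} only caps the count by $\max\{l_{i+2+c}-l_{i+1+c},2\}$; nothing in your enumeration forbids a gap of $5$ with count $5$. The paper rules out gaps $>2$ by playing Equation~(\ref{bottom}) against Lemma~\ref{steps-lemma}: the latter would force $x''_{i+2+c}=(\tmop{bin}(x''_{i+1+c})(0)^{q}(1))_2$ while the former forces $x''_{i+2+c}\geq(\tmop{bin}(x''_{i+1+c})(0)(1)^{q})_2$, a contradiction for $q>1$. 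So the fix is not cosmetic: you must extract the $r''$ lower bound from the level-$i$ hypothesis $x''_{i+1}=(\tmop{bin}(x''_i-1)(1))_2$ once, and carry that single inequality (equivalently, the uniform lower bound $x''_{i+1+c}\geq(\tmop{bin}(x''_{i+1})\gamma_{i+1}\cdots\gamma_{i+c})_2$) through the induction, rather than re-deriving a predecessor bound at each level.
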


\begin{proof}

  Note that $l_{i + 1} - l_i = 1$. We have $p''_i = (\tmop{bin} (x''_i)
  \beta_i)_2 = (\tmop{bin} (x''_i) (1) (0) \alpha_i)_2$. Hence, we have \ $r''
  \geq (\tmop{bin} (x''_i - 1) \beta_i)_2 = (\tmop{bin} (x''_i - 1) (1) (0)
  \alpha_i)_2 = (\tmop{bin} (x''_{i + 1}) (0) \alpha_i)_2$.
  
  Note that $(0) \alpha_i = \gamma_{i + 1} \ldots \gamma_{\tmop{last} - 1}$,
  where $\gamma_{i + 1 + c} = (0) (1)^{l_{i + 2 + c} - l_{i + 1 + c} - 1}$.
  
  The fact that $r'' \geq (\tmop{bin} (x''_{i + 1}) (0) \alpha_i)_2$ implies
  that, for arbitrary $c \in [[0, \tmop{last} - 1 - i]]$,
  \begin{equation}
    x''_{i + 1 + c} \geq (\tmop{bin} (x''_{i + 1}) \gamma_{i + 1} \ldots
    \gamma_{i + c})_2 \label{bottom}
  \end{equation}
  as follows:
  
  We have $\beta_{i + 1 + c} = (1)^{l_{i + 2 + c} - l_{i + 1 + c}} \gamma_{i
  + 2 + c} \ldots \gamma_{\tmop{last} - 1}$. Hence, for $x'' = (\tmop{bin}
  (x''_{i + 1}) \gamma_{i + 1} \ldots \gamma_{i + c})_2$, we have
  \begin{eqnarray*}
    2^{| \beta_{i + 1 + c} |} \cdot (x'' - 1) + (\beta_{i + 1 + c})_2 & < &
    2^{| \beta_{i + 1 + c} |} \cdot x'' + (\gamma_{i + 1 + c} \ldots
    \gamma_{\tmop{last} - 1})_2\\
    & = & (\tmop{bin} (x''_{i + 1}) \gamma_{i + 1} \ldots \gamma_{\tmop{last}
    - 1})_2\\
    & = & (\tmop{bin} (x''_{i + 1}) (0) \alpha_i)_2 .
  \end{eqnarray*}
  Thus $p''_{i + 1 + c} = \min \{x \in \mathbbm{X}_{i + 1 + c} | x > r'' \}
  \geq 2^{| \beta_{i + 1 + c} |} \cdot x'' + (\beta_{i + 1 + c})_2$ and,
  hence, $x''_{i + 1 + c} \geq x''$.
  
  To start induction, note that $x''_{i + 1} = (\tmop{bin} (x''_{i + 1})
  \gamma_{i + 1} \ldots \gamma_{i + 0})_2$, where $\gamma_{i + 1} \ldots
  \gamma_{i + 0}$ is an empty sequence.
  
  Induction step: We have $c \in [[0, d]]$, and
  \begin{itemize}
    \item  $x''_{i + 1 + c} = (\tmop{bin} (x''_{i + 1}) \gamma_{i + 1} \ldots
    \gamma_{i + c})_2$, and
    
    \item $| \bigcup_{t \in Y_{i + 1 + c}} U_t | \geq l_{i + 2 + c} - l_{i + 1
    + c}$.
  \end{itemize}
  \ Consider the case: $l_{i + 2 + c} - l_{i + 1 + c} = 1$. Then $\gamma_{i +
  1 + c} = (0)$. We also have
  \[ (\tmop{bin} (x''_{i + 1}) \gamma_{i + 1} \ldots \gamma_{i + c} \beta_{i
     + 1 + c})_2 = p''_{i + 1 + c} = m''_{i + 1 + c}, \]
  where $\beta_{i + 1 + c} = (1) \gamma_{i + 2 + c} \ldots \gamma_{\tmop{last}
  - 1}$. Note that $((1) \beta_{i + 2 + c})_2 \geq \beta_{i + 1 + c}$. Since
  $| \bigcup_{t \in Y_{i + 1 + c}} U_t | \geq 1$, we must have $m''_{i + 1 +
  c} > m''_{i + 2 + c} \in X_{i + 2 + c}$. Since $(\tmop{bin} (x''_{i + 1})
  \gamma_{i + 1} \ldots \gamma_{i + c} (1) \beta_{i + 2 + c})_2 \geq m''_{i +
  1 + c}$, and, by Equation~(\ref{bottom}),
  \[ x''_{i + 2 + c} \geq (\tmop{bin} (x''_{i + 1}) \gamma_{i + 1} \ldots
     \gamma_{i + 1 + c}) = (\tmop{bin} (x''_{i + 1}) \gamma_{i + 1} \ldots
     \gamma_{i + c} (0))_2, \]
  we must have
  \[ m''_{i + 2 + c} = p''_{i + 2 + c} = (\tmop{bin} (x''_{i + 1}) \gamma_{i
     + 1} \ldots \gamma_{i + c} (0) \beta_{i + 2 + c})_2 . \]
  Hence $x''_{i + 2 + c} = (\tmop{bin} (x''_{i + 1}) \gamma_{i + 1} \ldots
  \gamma_{i + 1 + c})_2$. \ By Lemma~\ref{bin-lemma}, we also have $|
  \bigcup_{t \in Y_{i + 1 + c}} U_t | = 1$.
  
  Consider the case: $l_{i + 2 + c} - l_{i + 1 + c} = 2$. Then $\gamma_{i + 1
  + c} = (01)$. Since, by Equation~(\ref{bottom}), $x''_{i + 2 + c} \geq
  (\tmop{bin} (x''_{i + 1 + c}) (01))_2$, and $| \bigcup_{t \in Y_{i + 2 + c}}
  U_t | \geq l_{i + 2 + c} - l_{i + 1 + c}$, we have, by
  Lemma~\ref{steps-lemma},
  \begin{eqnarray*}
    x''_{i + 2 + c} & = & (\tmop{bin} (x''_{i + 1}) \gamma_{i + 1} \ldots
    \gamma_{i + c} (01))_2\\
    & = & (\tmop{bin} (x''_{i + 1}) \gamma_{i + 1} \ldots \gamma_{i + 1 +
    c})_2
  \end{eqnarray*}
  and $p''_{i + 2 + c} = m''_{i + 2 + c}$ and $| \bigcup_{t \in Y_{i + 2 + c}}
  U_t | = l_{i + 2 + c} - l_{i + 1 + c}$.
  
  Consider the case: $l_{i + 2 + c} - l_{i + 1 + c} > 2$. Then $\gamma_{i + 1
  + c} = (0) (1)^q$, where $q = l_{i + 2 + c} - l_{i + 1 + c} - 1 > 1$. Since,
  by Equation~(\ref{bottom}), $x''_{i + 2 + c} \geq (\tmop{bin} (x''_{i + 1 +
  c}) (0)^q (1))_2$ and $| \bigcup_{t \in Y_{i + 1 + c}} U_t | \geq l_{i + 2 +
  c} - l_{i + 1 + c}$, we have by Lemma~\ref{steps-lemma}, $x''_{i + 2 + c} =
  (\tmop{bin} (x''_i) \gamma_{i + 1} \ldots \gamma_{i + c} (0)^q (1))_2$.
  However, by Equation~(\ref{bottom}), we have also
  \begin{eqnarray*}
    x''_{i + 2 + c} & \geq & (\tmop{bin} (x''_{i + 1}) \gamma_{i + 1} \ldots
    \gamma_{i + 1 + c})_2\\
    & = & (\tmop{bin} (x''_{i + 1}) \gamma_{i + 1} \ldots \gamma_{i + c} (0)
    (1)^q)_2\\
    & > & (\tmop{bin} (x''_{i + 1}) \gamma_{i + 1} \ldots \gamma_{i + c}
    (0)^q (1))_2,
  \end{eqnarray*}
  which is a contradiction. Thus the case \ $l_{i + 2 + c} - l_{i + 1 + c} >
  2$ is not possible.
\end{proof}

{\color{black} \begin{lemma}
  \label{right-side-lemma} $| \bigcup_{t \geq s} U_t | \leq k + 2$.
\end{lemma}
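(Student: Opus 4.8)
The plan is to imitate the proof of Lemma~\ref{left-side-lemma}. If $t'' \notin [[s,s+n-1]]$ then $t'' = \infty$, $r'' = n-1$ and $\bigcup_{t \geq s} U_t = \emptyset$, so assume $t'' \in [[s,s+n-1]]$; since $\bigcup_{t\geq s} U_t = \bigcup_{t\geq t''} U_t$ I may take $t'' = s$, and then $|\bigcup_{t\geq s} U_t| = \sum_{i=0}^{\tmop{last}} e_i$, where $e_i := |\bigcup_{t\in Y_i} U_t|$. By Lemmas~\ref{starting-lemma2}, \ref{middle-lemma2} and \ref{last-lemma2}, $e_0 = 1$, $e_i \leq \max\{l_i-l_{i-1},2\}$ for $i \in [[1,\tmop{last}-1]]$, and $e_{\tmop{last}} \leq l_{\tmop{last}}-l_{\tmop{last}-1}$ when $\tmop{last}>0$. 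The naive telescoping from Lemma~\ref{left-side-lemma} overshoots because of the $\max\{\cdot,2\}$, so I set $\varepsilon_i := e_i-(l_i-l_{i-1})$ for $i \in [[1,\tmop{last}]]$ and, using $l_0 \geq 0$ and $l_{\tmop{last}} = k$, write
\[ \sum_{i=0}^{\tmop{last}} e_i = 1 + (l_{\tmop{last}}-l_0) + \sum_{i=1}^{\tmop{last}} \varepsilon_i \leq 1 + k + \sum_{i=1}^{\tmop{last}} \varepsilon_i . \]
Thus everything reduces to proving $\sum_{i=1}^{\tmop{last}} \varepsilon_i \leq 1$; this already covers $\tmop{last} \in \{0,1\}$, where the sum is $\leq 0$.

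Since $l_i - l_{i-1} \geq 1$, the bounds above give $\varepsilon_i \leq 1$ for $i \in [[1,\tmop{last}-1]]$ with equality only if $l_i = l_{i-1}+1$ and $e_i = 2$ (call such $i$ an \emph{excess index}), while $\varepsilon_{\tmop{last}} \leq 0$. If there is no excess index we are done; otherwise let $a_1 < \dots < a_m$ be the excess indices (all in $[[1,\tmop{last}-1]]$, hence $a_j \leq \tmop{last}-2$ whenever $j < m$). The amortization claim is: for each $j < m$ there is an index $b_j \in [[a_j+1, a_{j+1}-1]]$ with $\varepsilon_{b_j} \leq -1$. To prove it, fix $j<m$. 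From $l_{a_j} = l_{a_j-1}+1$ and $e_{a_j} = 2$, Lemma~\ref{bin-lemma} (with its index $=a_j-1$) gives $x''_{a_j} = (\tmop{bin}(x''_{a_j-1}-1)(1))_2$ and $m''_{a_j} = p''_{a_j}$, which are exactly the hypotheses of Lemma~\ref{patch-lemma} with $i = a_j-1$. If $e_{a_j+1} < l_{a_j+1}-l_{a_j}$, put $b_j = a_j+1$. Otherwise let $d \geq 0$ be maximal with $d \leq \tmop{last}-a_j-1$ and $e_{a_j+1+c} \geq l_{a_j+1+c}-l_{a_j+c}$ for all $c\in[[0,d]]$ (valid, since $d=0$ works); Lemma~\ref{patch-lemma} forces $\varepsilon_{a_j+1+c}=0$ for $c\in[[0,d]]$, so $d$ cannot equal $\tmop{last}-a_j-1$ (else there is no excess index above $a_j$, contradicting $j<m$), hence by maximality $e_{a_j+2+d} < l_{a_j+2+d}-l_{a_j+1+d}$, i.e. $\varepsilon_{a_j+2+d} \leq -1$, and I put $b_j = a_j+2+d$. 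In both cases the indices strictly between $a_j$ and $b_j$ have $\varepsilon = 0$ and $\varepsilon_{b_j}<0$, so none of $a_j+1,\dots,b_j$ is an excess index, and therefore $b_j \leq a_{j+1}-1$.

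To finish, partition $[[1,\tmop{last}]]$ into $B_0 = [[1,a_1-1]]$, the blocks $B_j = [[a_j,a_{j+1}-1]]$ for $1 \leq j < m$, and $B_m = [[a_m,\tmop{last}]]$. Over $B_0$ there is no excess index, so $\sum_{i\in B_0}\varepsilon_i \leq 0$; each $B_j$ with $j<m$ contains the single excess index $a_j$, the index $b_j$ with $\varepsilon_{b_j}\leq-1$, and otherwise indices with $\varepsilon\leq 0$, so $\sum_{i\in B_j}\varepsilon_i \leq 1+(-1)+0 = 0$; and $B_m$ contains only the excess index $a_m$, with every other index (including $\tmop{last}$) having $\varepsilon \leq 0$, so $\sum_{i\in B_m}\varepsilon_i \leq 1$. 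Summing, $\sum_{i=1}^{\tmop{last}}\varepsilon_i \leq 1$, and hence $|\bigcup_{t\geq s}U_t| \leq k+2$. The step I expect to be the main obstacle is the amortization claim — specifically, making sure the witness $b_j$ obtained through Lemmas~\ref{bin-lemma} and \ref{patch-lemma} always lands strictly before the next excess index $a_{j+1}$, and that the index preconditions of those lemmas are met in every case; both work out because every excess index other than $a_m$ satisfies $a_j \leq \tmop{last}-2$.
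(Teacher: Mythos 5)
Your proof is correct and follows essentially the same route as the paper: after the same reduction to $t''=s$ and the same per-block bounds from Lemmas~\ref{starting-lemma2}, \ref{middle-lemma2} and \ref{last-lemma2}, you amortize each ``excess'' block ($l_{i+1}-l_i=1$, $|\bigcup_{t\in Y_{i+1}}U_t|=2$) against a deficit block located before the next excess block via Lemmas~\ref{bin-lemma} and \ref{patch-lemma}, exactly as in the paper's argument with the set $V$. Your explicit $\varepsilon_i$ bookkeeping and block partition is just a cleaner write-up of the same compensation argument.
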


\begin{proof}
  If $t'' \not\in [[s, s + n - 1]]$ then $t'' = \infty$ and \ $r'' = n - 1$
  and $| \bigcup_{t \geq s} U_t | = 0$.
  
  Otherwise $t'' \in [[s, s + n - 1]]$. \ Note that w.l.o.g. we may assume
  that $t'' = s$, since $| \bigcup_{t \geq s} U_t | = | \bigcup_{t \geq t''}
  U_t |$. So, let us assume that $t'' = s$. We have $| \bigcup_{t \geq s} U_t
  | = \sum_{i = 0}^{\tmop{last}} | \bigcup_{t \in Y_{i + 1}} U_t | = |
  \bigcup_{t \in Y_0} U_t | + \sum_{i = 0}^{\tmop{last} - 1} | \bigcup_{t \in
  Y_{i + 1}} U_t |$. By Lemma~\ref{starting-lemma2}, $| \bigcup_{t \in Y_0}
  U_t | = 1 \leq l_0 + 1$.
  
  Let $V =\{i \in [[0, \tmop{last} - 1]] \; | \; | \bigcup_{t \in Y_{i + 1}}
  U_t | > l_{i + 1} - l_i \}$. By Lemma~\ref{last-lemma2}, $V \subseteq [[0,
  \tmop{last} - 2]]$. By Lemma~\ref{middle-lemma2}, if $i \in V$, then $l_{i +
  1} - l_i = 1$ and $| \bigcup_{t \in Y_{i + 1}} U_t | = 2$. By
  Lemma~\ref{bin-lemma}, if $i \in V$, then ($\tmop{bin} (x''_{i + 1}))_2 =
  (\tmop{bin} (x''_i - 1) (1))_2$.
  
  If $|V| \leq 1$, then $| \bigcup_{t \geq s} U_t | = | \bigcup_{t \in Y_0}
  U_t | + \sum_{i = 0}^{\tmop{last} - 1} | \bigcup_{t \in Y_{i + 1}} U_t |
  \leq (l_0 + 1) + \sum_{i = 0}^{\tmop{last} - 1} (l_{i + 1} - l_i) + 1 \leq k
  + 2$.
  
  Consider the case $|V| > 1$:
  
  Let $i \in V \setminus \{\max V\}$ and let $i' = \min \{j \in V \; | \; j >
  i\}$.
  
  Let $d = \max \{c \; | \; | \bigcup_{t \in Y_{i + 2 + c}} U_t | \geq l_{i +
  2 + c} - l_{i + 1 + c} \}$.
  
  Consider the case $i + 2 + d \geq i' + 1$: By Lemma~\ref{patch-lemma}, we
  have $| \bigcup_{t \in Y_{i' + 1}} U_t | = l_{i' + 1} - l_{i'}$. However,
  this is contradiction with $i' \in V$. Thus we have shown that there must be
  $i''$, such that $i < i'' < i'$ and $| \bigcup_{t \in Y_{i'' + 1}} U_t | <
  l_{i'' + 1} - l_{i''}$.
  
  Thus $\sum_{i = 0}^{\tmop{last} - 1} | \bigcup_{t \in Y_{i + 1}} U_t | \leq
  1 + \sum_{i = 0}^{\tmop{last} - 1} (l_{i + 1} - l_i)$ and, hence, $|
  \bigcup_{t \geq s} U_t | \leq k + 2$ also in the case $|V| > 1$.
\end{proof}}

\section{Bound on the extra-energy}

\begin{theorem}
  The extra-energy of RBO receiver is not greater than $2 k + 3$.
\end{theorem}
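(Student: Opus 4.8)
The plan is to derive the theorem immediately from the two bounds established in the previous two sections, so that the proof is essentially one line of arithmetic. First I would recall from the introduction (and from \cite{DBLP:journals/corr/abs-1201-3318}) that the extra energy is, by definition, the number of time slots in which the receiver switches its radio on for a message whose key lies outside $[\kappa', \kappa'']$. Each such reception forces a change of exactly one of the two variables $\tmop{lb}$ or $\tmop{ub}$ --- exactly one, since a single key $\kappa$ cannot satisfy both $\kappa < \kappa'$ and $\kappa'' < \kappa$. Consequently the extra energy equals the left-side energy plus the right-side energy; and, as observed right after the definitions of the sets $L_t$ and $U_t$, these two quantities equal exactly $| \bigcup_{t \geq s} L_t |$ and $| \bigcup_{t \geq s} U_t |$ respectively (each nonempty $L_t$ and each nonempty $U_t$ is a singleton and the sets are pairwise disjoint, so a count of changes coincides with a count of indices).

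Next I would simply add the two established bounds. For $k \geq 2$, Lemma~\ref{left-side-lemma} gives $| \bigcup_{t \geq s} L_t | \leq k + 1$ and Lemma~\ref{right-side-lemma} gives $| \bigcup_{t \geq s} U_t | \leq k + 2$, whence the extra energy is at most $(k + 1) + (k + 2) = 2 k + 3$. For the remaining cases $k \in \{0, 1\}$ I would invoke the remark already recorded in the preliminaries: the sum of the left-side and right-side energy is then bounded by $2$, and since $2 \leq 2 k + 3$ for every $k \geq 0$, the asserted bound holds in these cases as well.

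There is no real obstacle here --- all of the combinatorial work has been carried out in Lemmas~\ref{left-side-lemma} and \ref{right-side-lemma} together with their supporting lemmas. The only point requiring a moment's care is the bookkeeping step identifying the ``extra energy'' of the protocol with the combined cardinality $| \bigcup_{t \geq s} L_t | + | \bigcup_{t \geq s} U_t |$; once that identification is in place, the theorem follows by addition.
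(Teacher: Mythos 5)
Your proof is correct and follows essentially the same route as the paper: the extra energy is identified with $| \bigcup_{t \geq s} L_t | + | \bigcup_{t \geq s} U_t |$ and the theorem follows by adding the bounds of Lemmas~\ref{left-side-lemma} and \ref{right-side-lemma}. Your explicit handling of the $k \in \{0,1\}$ cases via the earlier remark is a small but welcome extra care the paper leaves implicit.
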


\begin{proof}
  The extra energy is equal to $| \bigcup_{t \geq s} U_t | + | \bigcup_{t \geq
  s} L_t |$, which by Lemmas~\ref{left-side-lemma} and \ref{right-side-lemma}
  is not greater than $2 k + 3$.
\end{proof}

\end{document}